\newtheorem{theorem}{Theorem}[section]
\newtheorem{lemma}[theorem]{Lemma}
\theoremstyle{definition}
\newtheorem{definition}[theorem]{Definition}
\newtheorem{proposition}[theorem]{Proposition}
\theoremstyle{remark}
\numberwithin{equation}{section}
\newcommand{\abs}[1]{\lvert#1\rvert}
\newcommand{\N}{\mathcal{N}}
\newcommand{\HI}{\mathfrak{H}}
\newcommand{\R}{\mathbb{R}}
\newcommand{\D}{\mathfrak{D}}
\newcommand{\C}{\mathbb{C}}
\newcommand{\HQ}{\mathbb{H}}
\newcommand{\oz}{\overline{z}}
\newcommand{\qu}{\mathbf{q}}
\begin{document}
\title[Quaternionic coherent states]{Regular subspaces of a quaternionic Hilbert space from quaternionic Hermite polynomials and associated Coherent states}
\author{K. Thirulogasanthar$^{1}$ and S. Twareque Ali$^{2}$}
\address{$^{1}$ Department of Computer Science and Software
Engineering, Concordia University, 1455 De Maisonneuve Blvd. West,
Montreal, Quebec, H3G 1M8, Canada. }
\address{$^2$ Department of mathematics and Statistics, Concordia University, Montreal, Quebec, H3G 1M8, Canada}
\email{santhar@gmail.com and stali@mathstat.concordia.ca}
\thanks{The research of one of the authors (STA) was supported by the Natural Science and Engineering Research Council of Canada (NSERC).}
\subjclass{Primary
81R30, 46E22}
\date{\today}
\keywords{Quaternion, Hermite polynomials, Regular functions, Coherent states}
\begin{abstract}
We define quaternionic Hermite polynomials by analogy with two families of complex Hermite polynomials.  As in the complex case, these polynomials consatitute  orthogonal families of vectors in ambient quaternionic $L^2$-spaces. Using these polynomials, we then define regular and anti-regular subspaces of these  $L^2$-spaces, the associated reproducing kernels and the ensuing quaternionic coherent states.
\end{abstract}

\maketitle
\pagestyle{myheadings}

\section{Introduction}\label{sec_intro}
Building quantum mechanics on quaternionic Hilbert spaces has been a much studied problem for many
years (see, for example, \cite{Ad} and the many references cited therein). Associated to
this problem is that of building appropriate families of coherent states on quaternionic
Hilbert spaces. The fact
that the analogues of the usual canonical coherent states cannot be built using a group theoretical
 argument in the case of quaternions, has been elaborated in \cite{AdMil}. On the other hand,
 analogues of such coherent states in a quaternionic setting have been constructed using other
 methods in \cite{AliBhatRoy} and \cite{Thi2}. In this paper we study the possibility of
 constructing some analogues of the so-called {\em non-linear} coherent states on quaternionic
 Hilbert spaces, using the recently developed holomorphic function theory for quaternionic
 variables \cite{Am,Gra1,Gra2}.

Recall that the real Hermite polynomials are defined by
\begin{equation}\label{I1}
H_n(x)=(-1)^ne^{x^2}\frac{d^n}{dx^n}e^{-x^2}
\end{equation}
and it is well known that the functions $C_ne^{-x^2/2}H_n(x)$,  for some normalization constants
$C_n$, are the eigensolutions of the quantum harmonic oscillator  \cite{AFG}. As an immediate
extension of the real Hermite polynomials, by replacing the real variable $x$ by a complex
number $z$, the complex Hermite polynomials, $H_n(z)$, were studied in \cite{VE}. In particular,
in \cite{VE}, it has been  shown that these complex Hermite polynomials form an  orthonormal
basis of a certain Hilbert space of complex functions over $\mathbb C$ and  this Hilbert space
is a reproducing kernel Hilbert space. In \cite{Ga}, using these holomorphic Hermite polynomials,
a set of coherent states (CS) have been  built, which is then used to study  some quantum
mechanical issues and a quantization of the non-commutative plane. Apart from $H_n(z)$, another
interesting set of Hermite polynomials, $H_{n,m}(z,\overline{z})$, were studied in
\cite{Ghan,Mat,Wil} (see also references therein). In these papers it has been shown that
the functions, $e^{-|z|^2/2}H_{n,m}(z,\overline{z})$ are eigensolutions of the Landau problem
\cite{AFG,Ghan}. Recently, several interesting features of the $H_{n,m}(z,\overline{z})$ have
been studied by fixing either $n=0$ or $m=0$. In fact, by so fixing one can recover the
holomorphic and anti-holomorphic subspaces of a certain $L^2$-space and the subspaces so
obtained can also be identified as the well known Bargmann spaces of holomorphic and
anti-holomorphic functions. These are also reproducing kernel Hilbert spaces with reproducing
kernels associated to the canonical coherent states. More generally, other reproducing kernel
Hilbert spaces have been obtained using  subsets of  $H_{n,m}(z,\overline{z})$ as bases
\cite{AFG,Nic}, which also admit coherent states. In \cite{Nic}, these coherent states have
been used to implement quantizations of $\mathbb C$.
In \cite{Nic,Ga} the Hermite polynomials $H_{n,m}(z,\overline{z})$ and $H_n(z)$ and kernels
associated with these polynomials have been used to obtain coherent states and the authors
have used the CS so obtained to study some quantum phenomena and quantizations. In fact, the
procedure used in
\cite{Nic,Ga} to build CS was earlier  worked out in \cite{Iwa} and later in \cite{Ali, Thi3}
as generalization of the definition of canonical CS.
In this paper we shall use a similar approach to obtain quaternionic coherent states.

 For the sake of completeness we briefly revisit the procedure for building generalized and
 nonlinear coherent states.

Let $\{\phi_m\}_{m=0}^{\infty}$ be an orthonormal basis of an
abstract separable Hilbert space $\HI$. The well known canonical
coherent states are defined by:
\begin{equation}
\mid z\rangle=e^{-\frac{r^2}{2}}
\sum_{m=0}^{\infty}\frac{z^m}{\sqrt{m!}}\phi_m\in\HI,
\label{can-comp_CS}
\end{equation}
where $z = re^{i\theta}\in\mathbb C$, the complex plane.

A possible generalization of the above definition of canonical coherent states, to the so-called nonlinear coherent states, goes as  follows: Let $\D$ be an open subset of $\C$. For $z\in\D$ set
\begin{equation}\label{can-cs}
\mid z\rangle=\N(|z|)^{-\frac{1}{2}}
\sum_{m=0}^{\infty}\frac{z^m}{\sqrt{\rho(m)}}\phi_m\in\HI,
\end{equation}
where $\{\rho(m)\}_{m=0}^{\infty}$ is a positive sequence of real
numbers and $\N(|z|)$ is the normalization factor ensuring that
$\langle z\mid z\rangle=1.$ If in addition $\{\mid z\rangle \mid z\in
\D\}$ satisfy
\begin{equation}\label{can-res}
\int_{\D}\mid z\rangle\langle z\mid d\mu=I_{\HI},
\end{equation}
where $d\mu$ is an appropriately chosen measure on $\D$ and
$I_{\HI}$ is the identity operator on $\HI$, then
$\{\mid z\rangle \mid z\in \D\}$ is said to be a set of nonlinear coherent
states on $\D$. \\

More generally, (generalized) CS can be constructed as follows: Let $(\Omega,\mu)$
be a measure space and $\HQ$  a closed subspace of
$L^2(\Omega,\mu)$. Let $\{\Phi_m\}_{m=0}^{\text{dim}{(\HQ)}}$,
$\text{dim}{(\HQ)}$ denoting  the dimension of $\HQ$, be an
orthonormal basis of $\HQ$ satisfying:
$$\sum_{m=0}^{\text{dim}{(\HQ)}}\abs{\Phi_m(x)}^2<\infty$$
for all $x\in\Omega$. Let $\HI$ be another Hilbert space such that
$\text{dim}(\HQ)=\text{dim}(\HI)$. Let
$\{\phi_m\}_{m=0}^{\text{dim}{(\HI)}}$ be an orthonormal basis of
$\HI$. Define
\begin{equation}\label{rep}
K(x,y)=\sum_{m=0}^{\text{dim}{(\HQ)}}\overline{\Phi_m(x)}\Phi_{m}(y).
\end{equation}
Then $K(x,y)$ is a reproducing kernel, that is, $K(x,y)$ satisfies
\begin{enumerate}
\item[(a)] hermiticity, $K(x,y)=\overline{K(y,x)}$ for all $x,y\in \Omega$;
\item[(b)]positivity, $K(x,x)\geq 0$ for all $x\in\Omega$;
\item[(c)]idempotence,
$\int_{\Omega}K(x,y)K(y,z)d\mu(y)=K(x,z),$
\end{enumerate}
 and $\HQ$ is the
corresponding reproducing kernel Hilbert space. For $x\in \Omega$,
define
\begin{equation}\label{rep-cs}
\mid x\rangle=K(x,x)^{-\frac{1}{2}}\sum_{m=0}^{\text{dim}{(\HQ)}}\overline{\Phi_m(x)}\phi_m \in \mathfrak H\; .
\end{equation}
Then,
$$\langle x\mid x\rangle=K(x,x)^{-1}
\sum_{m=0}^{\text{dim}{(\HQ)}}\overline{\Phi_m(x)}\Phi_m(x)=1,$$
and
$$\mathcal{W}:\HI\longrightarrow\HQ\;\;\;\text{with}\;\;\;\mathcal{W}\phi(x)
=K(x,x)^{\frac{1}{2}}\langle x\mid \phi\rangle$$
is an isometry. Then, for $\phi,\psi\in\HI$ we have
\begin{eqnarray*}
\langle\phi\mid\psi\rangle_{\HI}=\langle\mathcal{W}\phi\mid\mathcal{W}\psi\rangle_{\HQ}&=&
\int_{\Omega}\overline{\mathcal{W}\phi(x)}\mathcal{W}\psi(x)d\mu(x)\\
&=&\int_{\Omega}\langle\phi\mid x\rangle\langle x\mid\psi\rangle K(x,x)d\mu(x),
\end{eqnarray*}
and
\begin{equation}\label{res-rep}
\int_{\Omega}\mid x\rangle\langle x\mid K(x,x)d\mu(x)=I_{\HI},
\end{equation}
where $K(x,x)$ plays the role of a positive weight function. Thus, the set of
states $\{\mid x\rangle \mid x\in\Omega\}$ forms a set of (generalized) CS.

\bigskip

In the case where $\{\Phi_m\}_{m=0}^{\text{dim}{(\HQ)}}$ is an
orthogonal basis of $\HQ$, one can define
$\rho(m)=\|\Phi_m\|^2;\;m=0,1,2, ...,\text{dim}{(\HQ)},$ and obtain an
orthonormal basis
$$\left\{\frac{\Phi_m}{\sqrt{\rho(m)}}\right\}_{m=0}^{\text{dim}{(\HQ)}}$$
of $\HQ$. Then, setting
\begin{equation}\label{rep-cs-2}
\mid x\rangle=K(x,x)^{-\frac{1}{2}}
\sum_{m=0}^{\text{dim}(\HI)}\frac{\Phi_m(x)}{\sqrt{\rho(m)}}\phi_m\in\HI,
\end{equation}
one obtains the desired result which is analogous to (\ref{can-cs}).

\bigskip

The above discussion motivates the following definition.
\begin{definition}\label{def-CS}
 Let $\D$ be an open subset of
$\mathbb C$. Let
$$\Phi_m: \D\longrightarrow \C,\;\;\;m=0,1,2,\dots,$$
be a sequence of complex functions. Define
\begin{equation}\label{ma-cs}
\mid z\rangle=\mathcal N(|z|)^{-\frac{1}{2}}
\sum_{m=0}^{\infty}\frac{\Phi_{m}(z)}{\sqrt{\rho(m)}}
\phi_{m}\in\mathfrak{H};\;\;z\in\D,
\end{equation}
where $\mathcal N(|z|)$ is a normalization factor and
$\{\rho(m)\}_{m=0}^{\infty}$ is a sequence of nonzero positive
real numbers. The set of vectors in (\ref{ma-cs}) is said to form
a set of CS if
\begin{enumerate}
\item[(a)] $\langle z\mid z\rangle=1$ for all $z\in\D$;
\item[(b)]
the states $\{\mid z\rangle \mid z\in\D\}$ satisfy a resolution of the identity:
\begin{equation}
\int_{\mathfrak D}\mid z\rangle \langle z\mid d\mu=I_{\mathfrak H}
\label{2},
\end{equation}
\end{enumerate}
where $d\mu$ is an appropriately chosen measure and $I_{\mathfrak H}$ is the
identity operator on $\mathfrak H$.
\end{definition}
In \cite{Thi}, we studied what were called quaternionic CS in a complex Hilbert space. These
were  defined as vector CS, built using a quaternionic variable. However, recently, in \cite{Thi2}
we have defined the cannonical quaternionic CS by replacing the $z$ in (\ref{can-cs}) by a
quaternion $\qu$ and considered these CS as vectors in  a quaternionic Hilbert space
(see also \cite{AdMil} for Prelemov type quaternionic CS). Further, we have also defined the
quaternionic version of Hermite polynomials by replacing $z$ in $H_{n,m}(z,\overline{z})$ by a
quaternion. Since there is more than one way of defining a derivative with respect to a
quaternionic variable \cite{Am}, in \cite{Thi2} the quaternionic derivative was introduced
in a formal sense. However, recent developments of quaternionic analysis give a definition of
a quaternionic derivative, the so-called Cullen derivative \cite{Gra1,Gra2}, which is more
useful for our purposes and we adopt it here.

The novelty of the present paper can be summarized as follows: using the Cullen derivative we
define the quaternionic counterparts of the Hermite polynomials $H_n(z)$ and
$H_{n,m}(z,\overline{z})$, as vectors in  quaternionic Hilbert spaces  by replacing $z$ by
a quaternion $\qu$. The two index quaternionic Hermite polynomials $H_{n,m}(\qu,\overline{\qu})$
will span a subspace $\mathfrak H_\qu$ of a quaternionic $L^2$-space. By fixing $n=0$ or $m=0$
in $H_{n,m}(\qu,\overline{\qu})$ we shall obtain the regular and anti-regular subspaces of
$\mathfrak H_\qu$  (as the counterparts of holomorphic and anti-holomorphic subspaces). We shall
also prove that the single indexed quaternionic Hermite polynomials $H_n(\qu)$ and
$H_n(\overline{\qu})$ serve as bases for certain regular and anti-regular subspaces respectively.
Apart from these, by defining kernels with $H_n(\qu)$ and $H_{n,m}(\qu,\overline{\qu})$ we will
obtain CS resembling (\ref{ma-cs}) over quaternionic Hilbert spaces and realize the regular
subspace as a reproducing kernel Hilbert space.
The functions $e^{-|\qu|^2/2}H_{n,m}(\qu,\overline{\qu})$ are shown to be eigenfunctions of an
operator $\mathfrak{L}_H$ with infinite degeneracy as in the Landau problem \cite{Thi2} (see
also end of Section \ref{sec-quherm-polyn}). However, we are unable at this point to give a
physical meaning to $\mathfrak{L}_H$. Further investigation of this point, quantization of the
type done  in \cite{Nic,Ga} and a study of the  modular structures along the lines of
\cite{AFG}, using quaternionic coherent states, are left for future work.

\section{Mathematical preleminaries}
In order to make the paper self-contained, we recall a few facts about quaternions which may not be well-known. In particular, we revisit the $2\times 2$ complex matrix representations of quaternions, quaternionic Hilbert spaces, their duals, the Cullen derivative and the definition of regularity of a function of a quaternionic variable.

\subsection{Quaternions}
Let $H$ denote the field of quaternions. Its elements are of the form $\qu=x_0+x_1i+x_2j+x_3k$ where $x_0,x_1,x_2$ and $x_3$ are real numbers, and $i,j,k$ are imaginary units such that $i^2=j^2=k^2=-1$, $ij=-ji=k$, $jk=-kj=i$ and $ki=-ik=j$. The quaternionic conjugate of $\qu$ is defined to be $\overline{\qu} = x_0 - x_1i - x_2j - x_3k$. We shall find it convenient to use the representation of quaternions by $2\times 2$ complex matrices:
 \begin{equation}
\qu = x_0 \sigma_{0} + i \underline{x} \cdot \underline{\sigma},
 \end{equation}
with $x_0 \in \mathbb R , \quad \underline{x} = (x_1, x_2, x_3)
\in \mathbb R^3$, $\sigma_0 = \mathbb{I}_2$, the $2\times 2$ identity matrix, and
$\underline{\sigma} = (\sigma_1, -\sigma_2, \sigma_3)$, where the
$\sigma_\ell, \; \ell =1,2,3$ are the usual Pauli matrices. The quaternionic imaginary units are identified as, $i = \sqrt{-1}\sigma_1, \;\; j = -\sqrt{-1}\sigma_2, \;\; k = \sqrt{-1}\sigma_3$. Thus,
 \begin{equation}
\qu = \left(\begin{array}{cc}
x_0 + i x_3 & -x_2 + i x_1 \\
x_2 + i x_1 & x_0 - i x_3
\end{array}\right) \qquad \text{and} \qquad \overline{\qu} = \qu^\dag\quad \text{(matrix adjoint)}\; .
 \label{q3}
 \end{equation}
Introducing the polar coordinates:
 \begin{eqnarray*}
x_0 &=& r \cos{\theta}, \\
x_1 &=& r \sin{\theta} \sin{\phi} \cos{\psi}, \\
x_2 &=& r \sin{\theta} \sin{\phi} \sin{\psi}, \\
x_3 &=& r \sin{\theta} \cos{\phi},
 \end{eqnarray*}
where $r \in [0,\infty)$, $\theta, \phi \in [0,\pi]$, and $\psi
\in [0,2\pi)$, we may write
 \begin{equation}
\qu = A(r) e^{i \theta \sigma(\widehat{n})}
 \label{q4},
 \end{equation}
where
 \begin{equation}
A(r) = r\mathbb \sigma_0 \quad \text{and} \quad
\sigma(\widehat{n}) = \left(\begin{array}{cc}
\cos{\phi} & \sin{\phi} e^{i\psi} \\
\sin{\phi} e^{-i\psi} & -\cos{\phi}
\end{array}\right).
\label{q5}
 \end{equation}
The matrices
$A(r)$ and $\sigma(\widehat{n})$ satisfy the conditions,
 \begin{equation}
A(r) = A(r)^\dagger \quad,\quad \sigma(\widehat{n})^2 = \sigma_0
\quad,\quad \sigma(\widehat{n})^\dagger = \sigma(\widehat{n})
\quad,\quad \lbrack A(r), \sigma(\widehat{n}) \rbrack = 0.
 \label{san1}
 \end{equation}
Note that $\vert\qu\vert^2  := \overline{\qu} \qu = r^2 \sigma_0 = (x_0^2 +  x_1^2 +  x_2^2 +  x_3^2)\mathbb I_2$ defines a real norm on $H$.

Using the above complex representation for quaternions, we defined a set of vector coherent states (VCS) in \cite{Thi}. To recall that construction briefly, if $\{\phi_m\}_{m=0}^\infty $ is an orthonormal basis of an
abstract, complex separable Hilbert space $\mathfrak{H}$ and $\{\chi^1, \chi^2\}$ is
an orthonormal basis of $\mathbb{C}^2$, then the VCS take the form,
 \begin{equation}
\mid \qu,j \rangle = \mathcal{N}_1(\vert \qu \vert)^{-1/2}
\sum_{m=0}^\infty \frac{\qu^m}{\sqrt{x_m!}} \chi^j \otimes \phi_m
\in \mathbb C^2 \otimes \mathfrak H, \quad j=1,2,
 \label{qv2}
 \end{equation}
where $\mathcal{N}_1(\vert \qu \vert)$ and $x_m! = \rho(m)$ can be chosen appropriately. Using (\ref{san1}) we can
determine the normalization constant $\mathcal N_1(\vert \qu
\vert)$, and the resolution of the identity (for details see
\cite{Thi}). First note that, in order for the norm of the vector
$\mid \qu,j \rangle$ to be finite, we must have,
 \begin{equation}
\langle \qu,j \mid \qu,j \rangle = \mathcal N_1(\vert \qu
\vert)^{-1} \sum_{m=0}^\infty \frac {r^{2m}}{x_m!} < \infty.
 \end{equation}
Thus, if $\lim_{m \rightarrow \infty} x_m = x$, we need to
restrict $r$ to $0 \leq r < L = \sqrt{x}$ for the convergence of
the above series. In this case, define
 \begin{equation}\label{domain}
\mathcal D = \{(r, \theta, \phi, \psi) \mid 0 \leq r < L, \; 0
\leq \phi \leq \pi, \; 0 \leq \theta, \psi < 2\pi\},
 \end{equation}
The resolution of the identity is then given with respect to a measure of the type
\begin{equation}\label{measure}
d\varsigma(r, \theta,
\phi, \psi)= d\tau(r) d\theta d\Omega(\phi ,\psi ), \quad \text{ with} \quad
d\Omega(\phi ,\psi) = \displaystyle{\frac{1}{4\pi}} \sin{\phi}
d\phi d\psi .
\end{equation}

\subsection{Quaternionic Hilbert spaces}
In this subsection we  define left and right quaternionic Hilbert spaces. For details we refer the reader to \cite{Ad}. We also define the Hilbert space of square integrable functions on quaternions based on \cite{Vis}.
\subsubsection{Right Quaternionic Hilbert Space}
Let $V_{H}^{R}$ be a linear vector space under right multiplication by quaternionic scalars (again $H$ standing for the field of quaternions).  For $f,g,h\in V_{H}^{R}$ and $\qu\in H$, the inner product
$$\langle\cdot\mid\cdot\rangle:V_{H}^{R}\times V_{H}^{R}\longrightarrow H$$
satisfies the following properties
\begin{enumerate}
\item[(i)]
$\overline{\langle f\mid g\rangle}=\langle g\mid f\rangle$
\item[(ii)]
$\|f\|^{2}=\langle f\mid f\rangle>0$ unless $f=0$, a real norm
\item[(iii)]
$\langle f\mid g+h\rangle=\langle f\mid g\rangle+\langle f\mid h\rangle$
\item[(iv)]
$\langle f\mid g\qu\rangle=\langle f\mid g\rangle\qu$
\item[(v)]
$\langle f\qu\mid g\rangle=\overline{\qu}\langle f\mid g\rangle$
\end{enumerate}
where $\overline{\qu}$ stands for the quaternionic conjugate. We assume that the
space $V_{H}^{R}$ is complete under the norm given above. Then,  together with $\langle\cdot\mid\cdot\rangle$ this defines a right quaternionic Hilbert space, which we shall assume to be separable. Quaternionic Hilbert spaces share most of the standard properties of complex Hilbert spaces. In particular, the Cauchy-Schwartz inequality holds on quaternionic Hilbert spaces as well as the Riesz representation theorem for their duals (see below).  Thus, the Dirac bra-ket notation
can be adapted to quaternionic Hilbert spaces:
$$\mid f\qu\rangle=\mid f\rangle\qu,\hspace{1cm}\langle f\qu\mid=\overline{\qu}\langle f\mid\;, $$
for a right quaternionic Hilbert space, with $\vert f\rangle$ denoting the vector $f$ and $\langle f\vert$ its dual vector.

\subsubsection{Left Quaternionic Hilbert Space}
Let $V_{H}^{L}$ be a linear vector space under left multiplication by quaternionic scalars.  For $f,g,h\in V_{H}^{L}$ and $\qu\in H$, the inner product
$$\langle\cdot\mid\cdot\rangle:V_{H}^{L}\times V_{H}^{L}\longrightarrow H$$
satisfies the following properties
\begin{enumerate}
\item[(i)]
$\overline{\langle f\mid g\rangle}=\langle g\mid f\rangle$
\item[(ii)]
$\|f\|^{2}=\langle f\mid f\rangle>0$ unless $f=0$, a real norm
\item[(iii)]
$\langle f\mid g+h\rangle=\langle f\mid g\rangle+\langle f\mid h\rangle$
\item[(iv)]
$\langle \qu f\mid g\rangle=\qu\langle f\mid g\rangle$
\item[(v)]
$\langle f\mid \qu g\rangle=\langle f\mid g\rangle\overline{\qu}$
\end{enumerate}
Again, we shall assume that the space $V_{H}^{L}$ together with $\langle\cdot\mid\cdot\rangle$ is a separable Hilbert space. Also,
\begin{equation}\label{leftcs}
\mid \qu f\rangle=\mid f\rangle\overline{\qu},\hspace{1cm}\langle \qu f\mid=\qu\langle f\mid.
\end{equation}
Note that, because of our convention for inner products, for a left quaternionic Hilbert space, the bra vector $\langle f\mid$ is to be identified with the vector itself, while the ket vector $\mid f \rangle$ is to be identified with its dual.
(There is a natural left multiplication by quaternionic scalars on the dual of a right quaternionic Hilbert space and a similar right multiplication on the dual of a left quaternionic Hilbert space.)

Separable quaternionic Hilbert spaces admit countable orthonormal bases. Let $V_H^L$ be a left quaternionic Hilbert space and let $\{e_v\}_{\nu = 0}^N$  ($N$ could be finite or infinite) be an orthonormal basis for it. Then, $\langle e_\nu \mid e_\mu\rangle = \delta_{\nu \mu}$ and any vector $f \in V_H^L$ has the expansion $f = \sum_\nu f_\nu e_\nu$, with $f_\nu = \langle f\mid e_\nu\rangle \in H$. Using such a basis, it is possible to introduce a multiplication from the right on $V_H^L$ by elements of $H$. Indeed, for $f \in V_H^L$ and $\qu\in H$ we define,
\begin{equation}
   f\qu = \sum_\nu (f_\nu \qu)e_\nu .
\label{rightmult}
\end{equation}

The field of quaternions $H$ itself can be turned into a left quaternionic Hilbert space by defining the inner product $\langle \qu \mid \qu^\prime \rangle = \qu \qu^{\prime\dag} = \qu\overline{\qu^\prime}$ or into a right quaternionic Hilbert space with  $\langle \qu \mid \qu^\prime \rangle = \qu^\dag \qu^\prime = \overline{\qu}\qu^\prime$.
\subsubsection{Quaternionic Hilbert Spaces of Square Integrable Functions}
Let $(X, \mu)$ be a measure space and $H$  the field of quaternions, then
$$L^2_H(X,\mu)=\left\{f:X\rightarrow H\;\; \left| \;\; \int_X|f(x)|^2d\mu(x)<\infty \right.\right\}$$\label{L^2}
is a left quaternionic Hilbert space, with the (left) scalar product
\begin{equation}
\langle f \mid g\rangle =\int_X f(x)\overline{g(x)} d\mu(x),
\label{left-sc-prod}
\end{equation}
where $\overline{g(x)}$ is the quaternionic conjugate of $g(x)$, and (left)  scalar multiplication $af, \; a\in H,$ with $(af)(q) = af(q)$ (see \cite{Vis} for details). Similarly, one could define a right quaternionic Hilbert space of square integrable functions.

\subsection{Dual spaces}
In order to obtain a  quaternionic version of the Riesz representation theorem, we need to recall a few facts about the dual space of a quaternionic Hilbert space. We follow \cite{Alek} in order to do this.
Let $\mathfrak{H}_{ld}$ be the left dual space of the left quaternionic Hilbert space $V_H^L$. That is,
$$\mathfrak{H}_{ld}=\{h : V_H^L\longrightarrow H~ \mid ~h~ \text{is bounded and left linear}\}$$
with the usual norm $\|h\|= \sup\{\vert h(x)\vert \;\mid \;\|x\|=1,~x\in V_H^L\}$. It is known that $\mathfrak{H}_{ld}$ is a real vector space. Moreover, $\mathfrak{H}_{ld}$ can be transformed into a quaternionic Hilbert space. Indeed, as noted above, $V_H^L$ also admits a right multiplication by quaternionic scalars. Using this fact, for any functional $h\in\mathfrak{H}_{ld}$ and any $\lambda\in H$ define
$$(\lambda h)(x)=h(x\lambda),\quad (h\lambda)(x)=h(x)\lambda;~~~~~x\in V_H^L.$$
Then $\mathfrak{H}_{ld}$ becomes a two-sided quaternionic Banach space, with the scalar multiplication so defined.
\begin{theorem}[Riesz representation theorem] For any functional $h\in\mathfrak{H}_{ld}$, exactly as in the real and complex cases,
\begin{equation}\label{E13}
h(x)=\langle x\vert y\rangle,\quad x\in V_H^L
\end{equation}
for a vector $y\in V_H^L$, and then $\|h\|=\|y\|$.
\end{theorem}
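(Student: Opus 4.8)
The plan is to follow the classical Riesz argument, adapting each step so as to track carefully the side on which quaternionic scalars act, since non-commutativity is the only genuine novelty here. I would exploit the countable orthonormal basis $\{e_\nu\}_{\nu=0}^{N}$ of $V_H^L$ guaranteed above, together with the expansion $f=\sum_\nu f_\nu e_\nu$, $f_\nu=\langle f\mid e_\nu\rangle$, and recover the representing vector $y$ explicitly from the ``quaternionic Fourier coefficients'' of $h$.

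First I would set $c_\nu=h(e_\nu)\in H$ and use boundedness (hence continuity) and left-linearity $h(\lambda v)=\lambda h(v)$ to write, for $x=\sum_\nu x_\nu e_\nu$ with $x_\nu=\langle x\mid e_\nu\rangle$,
\[
h(x)=\sum_\nu x_\nu c_\nu=\sum_\nu \langle x\mid e_\nu\rangle\, c_\nu ,
\]
the interchange of $h$ with the norm-convergent sum being justified by continuity of $h$. The candidate vector is then $y=\sum_\nu \overline{c_\nu}\,e_\nu$: using property (v) one has $\langle x\mid e_\nu\rangle c_\nu=\langle x\mid \overline{c_\nu}\,e_\nu\rangle$, so that at least formally $h(x)=\langle x\mid y\rangle$.

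The key step, and the one place where a real estimate is needed, is to show that $y$ actually lies in $V_H^L$, i.e.\ that $\sum_\nu |c_\nu|^2<\infty$; this is the quaternionic analogue of Bessel's inequality, and I expect it to be the main obstacle. I would prove it by testing $h$ on the finite truncations $y_F=\sum_{\nu\in F}\overline{c_\nu}\,e_\nu$: since $h(\overline{c_\nu}\,e_\nu)=\overline{c_\nu}\,c_\nu=|c_\nu|^2$ by left-linearity, one obtains $h(y_F)=\sum_{\nu\in F}|c_\nu|^2=\|y_F\|^2$ (the last equality coming from a direct computation of $\langle y_F\mid y_F\rangle$ using (iv) and (v) and orthonormality), while boundedness gives $|h(y_F)|\le \|h\|\,\|y_F\|$. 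Together these force $\|y_F\|^2=\sum_{\nu\in F}|c_\nu|^2\le \|h\|^2$ uniformly in $F$, so the partial sums of $y$ are Cauchy, $y\in V_H^L$, and by continuity of the inner product $h(x)=\langle x\mid y\rangle$ for every $x$.

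Finally the norm identity follows in the standard way: the Cauchy--Schwarz inequality (assumed above) gives $|h(x)|=|\langle x\mid y\rangle|\le \|x\|\,\|y\|$, hence $\|h\|\le \|y\|$, while evaluating on $x=y/\|y\|$ (when $y\neq 0$; the case $y=0$, i.e.\ $h=0$, being trivial) yields $h(y/\|y\|)=\|y\|$ by property (iv), so $\|h\|\ge\|y\|$. Beyond the Bessel estimate, the remaining delicate points are purely bookkeeping: consistently invoking (iv) for scalars on the left and (v) for scalars on the right, and confirming that the right-multiplication convention (\ref{rightmult}) does not interfere, which it does not, since every scalar appearing in the argument is attached to a basis vector on the left. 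An alternative, basis-free route is available through the projection theorem: taking $N=\ker h$, closed by continuity, and a nonzero $z\in N^\perp$, one checks directly that $y=\|z\|^{-2}\,\overline{h(z)}\,z$ represents $h$; this avoids the Bessel step at the cost of invoking the orthogonal decomposition $V_H^L=N\oplus N^\perp$.
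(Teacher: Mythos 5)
Your argument is correct, and it is worth noting that the paper itself gives no proof of this theorem: it is stated as a quoted result, deferred to the reference on dual spaces of quaternionic Hilbert spaces (Torgasev), with only the remark that it holds ``exactly as in the real and complex cases.'' Your orthonormal-basis proof therefore supplies what the paper leaves implicit, and every step is consistent with the paper's conventions: the identity $\langle x\mid e_\nu\rangle c_\nu=\langle x\mid \overline{c_\nu}\,e_\nu\rangle$ is exactly property (v), the computation $h(\overline{c_\nu}e_\nu)=\overline{c_\nu}c_\nu=\vert c_\nu\vert^2$ uses left-linearity together with the fact that $\overline{\mathbf q}\mathbf q$ is real, and the uniform bound $\Vert y_F\Vert^2=h(y_F)\le\Vert h\Vert\,\Vert y_F\Vert$ is the right quaternionic surrogate for Bessel's inequality. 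The norm identity also goes through because $\Vert y\Vert^{-1}$ is a real (hence central) scalar, and the Cauchy--Schwarz inequality you invoke is explicitly assumed valid in the paper. Your alternative route via $N=\ker h$ and $y=\Vert z\Vert^{-2}\overline{h(z)}\,z$ also checks out, with the one caveat that the decomposition $x=(x-\lambda z)+\lambda z$ must be set up with $\lambda=h(x)h(z)^{-1}$ acting on the left (the naive $h(z)x-h(x)z$ trick fails by non-commutativity), and that it presupposes the projection theorem for quaternionic Hilbert spaces, which the paper does not state; the basis argument is therefore the safer of your two options here.
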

Let $\{e_{\nu}\mid \nu\in\Lambda\}$ be a fixed orthonormal basis of $V_H^L$ and
define $J:\mathfrak{H}_{ld}\longrightarrow V_H^L$ by the cannonical mapping defined by the relation (\ref{E13}), that is by, $h(x)=\langle x\vert Jh\rangle;~~x\in V_H^L$. Then it can be shown that $J$ is additive, isometric and bijective. Now define
$$K:V_H^L\longrightarrow V_H^L, \qquad K(x)=\sum_{\nu\in\Lambda}\overline{\langle x\vert e_{\nu}\rangle}e_{\nu}.$$
Clearly, $K$ is additive. (Note,  $x = \sum_{\nu\in\Lambda}\langle x\vert e_{\nu}\rangle e_{\nu}$).
\begin{theorem}
The left dual space $\mathfrak{H}_{ld}$ of $V_H^L$, is also a two-sided quaternionic Hilbert space, if we introduce the inner product in $\mathfrak{H}_{ld}$ by
\begin{equation}\label{E14}
\langle h\vert k\rangle=\langle KJh\vert KJk\rangle;~~~~~h,k\in\mathfrak{H}_{ld}.
\end{equation}
The inner product (\ref{E14}) is consistent with the norm of $\mathfrak{H}_{ld}$.
\end{theorem}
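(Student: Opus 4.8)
The plan is to transport the entire Hilbert-space structure of $V_H^L$ onto $\mathfrak{H}_{ld}$ through the composite map $KJ : \mathfrak{H}_{ld} \to V_H^L$, after showing that $KJ$ is an additive, isometric bijection that is moreover linear for \emph{both} the left and the right quaternionic scalar multiplications. Once this is in hand, every one of the defining inner-product properties, together with positivity, completeness and consistency with the norm, will follow by pulling the corresponding property back from $V_H^L$ through $KJ$, since (\ref{E14}) defines $\langle h\mid k\rangle$ to be $\langle KJh\mid KJk\rangle$ by fiat.

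First I would record the elementary properties of $K$. Writing $x=\sum_\nu x_\nu e_\nu$ with $x_\nu=\langle x\mid e_\nu\rangle$, the map $K$ replaces each coefficient by its conjugate, $Kx=\sum_\nu \overline{x_\nu}e_\nu$; from this one reads off directly that $K$ is additive, that $K^2=\mathrm{id}$ (hence bijective), and, using axioms (iv)--(v) together with $\overline{\qu}\qu=|\qu|^2$ and orthonormality, that $\|Kx\|=\|x\|$, so $K$ is isometric. The same coefficient description yields the twisting rules $K(\lambda x)=(Kx)\overline{\lambda}$ and $K(x\lambda)=\overline{\lambda}(Kx)$: each of $K$'s two scalar actions interchanges left and right multiplication and conjugates the scalar.

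The heart of the argument is the analogous analysis of $J$. Since $J$ is already known to be additive, isometric and bijective, what remains is to compute how $J$ responds to the two scalar actions $(\lambda h)(x)=h(x\lambda)$ and $(h\lambda)(x)=h(x)\lambda$. The right action is straightforward: from $h(x)=\langle x\mid Jh\rangle$ and axiom (v) one gets $\langle x\mid Jh\rangle\lambda=\langle x\mid \overline{\lambda}\,Jh\rangle$, whence $J(h\lambda)=\overline{\lambda}\,Jh$. The left action is the delicate point and the \emph{main obstacle}: the expression $(\lambda h)(x)=h(x\lambda)=\langle x\lambda\mid Jh\rangle$ involves the right multiplication $x\lambda=\sum_\nu(x_\nu\lambda)e_\nu$, which does not simplify through the inner-product axioms alone because of non-commutativity. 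I would resolve this by expanding in the orthonormal basis: setting $a_\nu=\langle e_\nu\mid Jh\rangle$ so that $h(x)=\sum_\nu x_\nu a_\nu$, one finds $(\lambda h)(x)=\sum_\nu x_\nu(\lambda a_\nu)$, which identifies the coefficients of $J(\lambda h)$ and, after reconjugating to pass back to the vector, gives $J(\lambda h)=(Jh)\overline{\lambda}$. Thus $J$, like $K$, interchanges the two sides and conjugates.

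Composing, the two twists cancel: $KJ(\lambda h)=K((Jh)\overline{\lambda})=\lambda\,KJh$ and $KJ(h\lambda)=K(\overline{\lambda}\,Jh)=(KJh)\lambda$, so $KJ$ is genuinely two-sided linear; this is precisely why the factor $K$ must be inserted in (\ref{E14}), since $J$ alone would leave the inner product conjugate-twisted. With $KJ$ additive, isometric, bijective and two-sided linear, the axioms follow mechanically: hermiticity from $\overline{\langle KJh\mid KJk\rangle}=\langle KJk\mid KJh\rangle$; additivity from additivity of $KJ$; the scalar rules from $\langle\lambda h\mid k\rangle=\langle\lambda\,KJh\mid KJk\rangle=\lambda\langle h\mid k\rangle$ and its mirror image $\langle h\mid\lambda k\rangle=\langle h\mid k\rangle\overline{\lambda}$; and positivity together with consistency with the norm from $\langle h\mid h\rangle=\|KJh\|^2=\|h\|^2$, the last equality using that $J$ and $K$ are both isometric (so $\langle h\mid h\rangle>0$ unless $h=0$, as $KJ$ is injective). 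Finally, $\mathfrak{H}_{ld}$ is complete in $\|\cdot\|$ as the dual of a Banach space, and since this norm is exactly the one induced by the new inner product, $\mathfrak{H}_{ld}$ is a two-sided quaternionic Hilbert space, as claimed.
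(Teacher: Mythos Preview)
The paper does not actually supply a proof of this theorem: it is quoted from the reference \cite{Alek} (Torga\v{s}ev), and the only hint the paper gives toward the argument is the remark immediately following the statement, that $U=KJ$ is ``a two-linear bijective map, and thereby $\mathfrak{H}_{ld}$ is isomorphic to $V_H^L$.'' Your proposal is precisely a fleshing-out of that remark, and it is correct.

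The key point, which you handle well, is that neither $J$ nor $K$ alone is quaternion-linear: each one conjugates the scalar and swaps the side of multiplication (your computations $J(h\lambda)=\overline{\lambda}\,Jh$, $J(\lambda h)=(Jh)\overline{\lambda}$, $K(\lambda x)=(Kx)\overline{\lambda}$, $K(x\lambda)=\overline{\lambda}\,(Kx)$ are all right, and your basis-expansion treatment of $J(\lambda h)$ is exactly what is needed, since the auxiliary right multiplication on $V_H^L$ is defined through coordinates). Composing the two antilinear twists gives genuine two-sided linearity of $KJ$, and then every inner-product axiom (i)--(v), positivity, completeness, and norm-consistency transfer from $V_H^L$ as you describe. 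So your argument both matches the route indicated in the paper and supplies the details the paper omits.
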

Further, if we define $U:\mathfrak{H}_{ld}\longrightarrow V_H^L$ by $U=KJ$, then $U$ is a two-linear  bijective map, and thereby $\mathfrak{H}_{ld}$ is isomorphic to $V_H^L$.

\subsection{Cullen regular functions}
There have been a number of different suggestions in the literature on how the notion of holomorphy could be extended to functions of a quaternionic variable. We mention two here, of which we shall adopt the second definition. For a brief history of quaternionic holomorphy we refer the reader to \cite{Am}. The first definition of quaternionic holomorpy is given via the Cauchy-Fueter equations, which attempts to mimic the Cauchy-Riemann equations in a straightforward way.

\begin{definition}(Cauchy-Fueter equations) \cite{Am, Gra1}\label{def-cauchy-fueter}
Let $f:H\longrightarrow H$ be a quaternion valued function of a quaternionic variable. We say that $f$ is left-regular if it satisfies the Cauchy-Fueter equation
$$\frac{\partial_l f}{\partial{\overline{\qu}}}=\frac{\partial f}{\partial x_0}+ i\frac{\partial f}{\partial x_1}+ j\frac{\partial f}{\partial x_2}+ k\frac{\partial f}{\partial x_3}=0 ,$$
and that $f$ is right-regular if it satisfies the other Cauchy-Fueter equation
$$\frac{\partial_r f}{\partial{\overline{\qu}}}=\frac{\partial f}{\partial x_0}+\frac{\partial f}{\partial x_1}i+\frac{\partial f}{\partial x_2}j+\frac{\partial f}{\partial x_3}k=0 .$$
\end{definition}
Using this definition a theory of regular functions has been developed as a well structured theory.
However, the unpleasant feature of this definition is that under the Cauchy-Fueter equations the
function $f(\qu)=\qu$ is not regular, and thereby none of the monomials or polynomials are regular.
There have been several attempts to overcome this feature. The most promising, and recent attempt
has appeared in \cite{Gra1} (see also \cite{Gra2}) where the following definition is offered.
Let $\mathbb{S}=\{\qu=x_1i+x_2j+x_3j~\vert~x_1,x_2,x_3\in\mathbb{R},~x_1^2+x_2^2+x_3^2=1\}$.
\begin{definition}(Slice-regular functions \cite{Gra1})\label{D2}
Let $\Omega$ be a domain in $H$. A real differentiable (i.e., with respect to $x_0$ and the $x_i,\; i=1,2,3$) function $f:\Omega\longrightarrow H$ is said to be slice left regular if, for every quaternion $I\in\mathbb{S}$, the restriction of $f$ to the complex line $L_I=\mathbb{R}+I\mathbb{R}$ passing through the origin, and containing $1$ and $I$, has continuous partial derivatives (with
 respect to $x$ and $y$, every element in $L_I$ being uniquely expressible as $x + yI$) and satisfies
\begin{equation}
\overline{\partial}_I f (x + yI) := \frac 12\left(\frac {\partial f_I (x + yI )}{\partial x}
       + I \frac {\partial f_I (x + yI )}{\partial y}\right) = 0\; .
\label{leftslicereg}
\end{equation}
Similarly,  it is said to be slice right regular if
\begin{equation}
\overline{\partial}_I f (x + yI) := \frac 12\left(\frac {\partial f_I (x + yI )}{\partial x}
       +  \frac {\partial f_I (x + yI )}{\partial y}I\right) = 0\; .
\label{rightslicereg}
\end{equation}
\end{definition}

 With this definition all monomials of the form $a\qu^n,~a\in H,~n\in\mathbb{N}$, are slice right regular while those of the form $\qu^n a,\; a\in H,~n\in\mathbb{N}$, are slice left regular . Since regularity respects addition, all polynomials of the form $f(\qu)=\sum_{t=0}^{n}a_t\qu^t$, with $a_t\in H$, are slice right regular and similarly polynomials of the form $f(\qu)=\sum_{t=0}^{n}\qu^t a_t$, are slice left regular. Further, an analog of Abel's theorem guarantees convergence of appropriate infinite power series.
\begin{proposition}\cite{Gra1}\label{P1}
For any non-real quaternion $\qu\in H-\mathbb{R}$, there exist, and are unique, $x,y\in\mathbb{R}$ with $y>0$, and $I\in\mathbb{S}$ such that $\qu=x+yI$.
\end{proposition}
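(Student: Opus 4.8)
The plan is to split $\qu$ into its real and imaginary (vector) parts and simply read off $x$, $y$, and $I$. Writing $\qu = x_0 + x_1 i + x_2 j + x_3 k$, the hypothesis $\qu \in H - \mathbb{R}$ says precisely that the vector part $\underline{v} := x_1 i + x_2 j + x_3 k$ is nonzero. For existence I would set $x = x_0$, $y = |\underline{v}| = \sqrt{x_1^2 + x_2^2 + x_3^2}$ (which is $>0$ exactly because $\underline{v}\neq 0$), and $I = \underline{v}/y$. A one-line check shows $I$ is purely imaginary with coordinates $(x_1/y, x_2/y, x_3/y)$ whose squares sum to $1$, so $I \in \mathbb{S}$, and by construction $\qu = x_0 + \underline{v} = x + yI$.

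For uniqueness the key algebraic fact is that every $I \in \mathbb{S}$ squares to $-1$: if $I = ai + bj + ck$ with $a^2+b^2+c^2 = 1$, then expanding $I^2$ using $i^2=j^2=k^2=-1$ together with $ij=-ji$, $jk=-kj$, $ki=-ik$, the cross terms cancel in pairs and $I^2 = -(a^2+b^2+c^2) = -1$. In particular each such $I$ is purely imaginary, so $yI$ has zero real part. Suppose then $x + yI = x' + y'I'$ with $y, y' > 0$ and $I, I' \in \mathbb{S}$. Comparing real parts forces $x = x' = x_0$; subtracting, $yI = y'I' = \underline{v}$, and taking quaternionic norms with $|I| = |I'| = 1$ and $y, y' > 0$ gives $y = y' = |\underline{v}|$, whence $I = I' = \underline{v}/|\underline{v}|$.

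There is essentially no analytic difficulty here, so the only point I would flag carefully is the role of the constraint $y > 0$. Since $-I \in \mathbb{S}$ whenever $I \in \mathbb{S}$, the pairs $(y, I)$ and $(-y, -I)$ yield the same quaternion; without the positivity requirement the representation would be two-to-one. Requiring $y > 0$ is exactly what breaks this ambiguity and singles out $I$ as the normalized vector part. Accordingly, the verification that $\mathbb{S}$ is closed under negation and the bookkeeping of the real/imaginary splitting are the steps worth stating explicitly, the remaining computations being routine.
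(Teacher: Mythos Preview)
Your proof is correct and is the standard elementary argument. The paper does not actually prove this proposition; it is quoted from \cite{Gra1} without proof, so there is nothing to compare against, but what you wrote is precisely the expected decomposition into real part and normalized imaginary part.
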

Henceforth we shall refer to slice right (left) regular functions simply as right (left) regular functions. We now define the Cullen derivative of regular functions.
\begin{definition}(Cullen derivative \cite{Gra1,Gra2})\label{D3}
Let $\Omega$ be a domain in $H$, and let $f:\Omega\longrightarrow H$ be a left regular function. The Cullen derivative of $f$, $\partial_cf$, is defined as
$$\partial_c   f (\qu)=\left\{\begin{array}{ccc}
\partial_If(\qu):=\dfrac{1}{2}\left(\dfrac{\partial f_I(x+Iy)}{\partial x}-I\dfrac{\partial f_I(x+Iy)}{\partial y}\right)&\text{if}&y\not=0\\[4mm]
\dfrac{\partial f}{\partial x}(x)&\text{if}&\qu =x~~\text{is real}\end{array}
\right.$$
Similarly, for a right regular function $f$ its Cullen derivative is defined as
$$\partial_c   f (\qu)=\left\{\begin{array}{ccc}
\partial_If(\qu):=\dfrac{1}{2}\left(\dfrac{\partial f_I(x+Iy)}{\partial x}-\dfrac{\partial f_I(x+Iy)}{\partial y} I\right)&\text{if}&y\not=0\\[4mm]
\dfrac{\partial f}{\partial x}(x)&\text{if}&\qu =x~~\text{is real}\end{array}
\right.$$
\end{definition}
Under the above definition the Cullen derivative of a regular function is regular and with $a_n\in H$ we have, for example, for a right regular power series,
\begin{equation}\label{E11}
\partial_c\left(\sum_{n=0}^{\infty}a_n \qu^n\right)=\sum_{n=0}^{\infty} n a_n \qu^{n-1}.
\end{equation}
The following theorem gives the quaternionic version of holomorphy via a Taylor series. Let $B(0,R)$ be an open ball in $H$, of radius $R$ and centered at $0$.
\begin{theorem}\cite{Gra1}\label{T5}
A function $f: B(0,R)\longrightarrow H$ is right, respectively left, regular if and only if it has a series expansion of the form
$$ f(\qu)=\sum_{n=0}^{\infty}\frac{1}{n!}\frac{\partial^n f}{\partial x^n}(0)\qu^n , \qquad\text{respectively,} \qquad f(\qu)=\sum_{n=0}^{\infty}\qu^n\frac{1}{n!}\frac{\partial^n f}{\partial x^n}(0),$$
converging on $B(0, R)$.
\end{theorem}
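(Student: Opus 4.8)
The plan is to establish both implications, carrying out the forward direction for the right regular case (the left regular case being entirely parallel, with every scalar coefficient moved to the right of the powers $\qu^n$). The reverse implication is the easier half: by the observations following Definition~\ref{D2}, every monomial $a\qu^n$ with $a\in H$ is right regular, regularity is preserved under finite sums, and, invoking the Abel-type convergence theorem quoted there, a series $\sum_n a_n\qu^n$ of radius of convergence $R$ converges uniformly on compact subsets of $B(0,R)$. Each partial sum satisfies \eqref{rightslicereg} on every slice, and this property passes to the uniform limit, so such a series defines a right regular function on $B(0,R)$. It therefore remains only to produce the expansion for an arbitrary right regular $f$.

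For the forward direction, fix $I\in\mathbb S$ and identify the slice $L_I=\mathbb R+I\mathbb R$ with a copy of $\mathbb C$, writing its points as $\qu=x+yI$. Choose $J\in\mathbb S$ with $IJ=-JI$, so that $H=\mathbb C_I\oplus\mathbb C_I J$ with $\mathbb C_I=\mathbb R+\mathbb R I$, and split the restriction $f_I:=f\vert_{L_I\cap B(0,R)}$ as $f_I=F_1+F_2 J$ with $F_1,F_2$ taking values in $\mathbb C_I$. Substituting into \eqref{rightslicereg} and separating the $\mathbb C_I$- and $\mathbb C_I J$-components, the single slice equation decouples into a pair of (possibly conjugated) Cauchy--Riemann equations for $F_1$ and $F_2$; hence each is real-analytic on the disc $L_I\cap B(0,R)$ and has a convergent one-variable power series there. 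Collecting these series and using the identity $J\zeta=\overline{\zeta}\,J$ for $\zeta\in\mathbb C_I$ to bring the factor $J$ to the left, each term becomes a power $\qu^n=(x+yI)^n$ carrying a quaternionic coefficient on its left, which yields an expansion $f_I(\qu)=\sum_{n=0}^{\infty}c_n^{(I)}\qu^n$ valid on $L_I\cap B(0,R)$.

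It remains to identify the coefficients and to glue the slices together, which is the crux. Restricting the slice expansion to the real segment of $L_I$ (where $\qu=x$) gives the ordinary real-variable power series $f(x)=\sum_n c_n^{(I)}x^n$, whose $n$-th derivative in $x$ at $0$ equals $n!\,c_n^{(I)}$; hence $c_n^{(I)}=\tfrac{1}{n!}\tfrac{\partial^n f}{\partial x^n}(0)$, a value manifestly independent of the slice $I$ since it is computed entirely on the real axis (equivalently, by \eqref{E11} and Definition~\ref{D3} it is $\tfrac{1}{n!}\partial_c^{\,n}f(0)$, and the Cullen derivative collapses to $\partial/\partial x$ on reals). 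Writing $c_n$ for this common value and $g(\qu):=\sum_{n=0}^{\infty}c_n\qu^n$, the slice expansion shows $g=f$ on $L_I\cap B(0,R)$ for every $I$; since by Proposition~\ref{P1} each point of $B(0,R)$ lies on some slice $L_I$ (real points lying on all of them), we conclude $g\equiv f$ on $B(0,R)$, with radius of convergence at least $R$ because the $c_n$ are exactly the coefficients of the holomorphic restrictions, which converge on the full disc of radius $R$. I expect the main obstacle to be precisely this gluing step: a priori each slice furnishes its own holomorphic expansion, and one must check that these are forced to share a single set of quaternionic coefficients. What makes it work is that all slices meet along the real axis, on which the Cullen derivative reduces to the ordinary $x$-derivative and thereby pins down the coefficients once and for all.
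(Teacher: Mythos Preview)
The paper does not supply its own proof of this theorem; it is quoted from \cite{Gra1} and stated without argument. Your sketch is correct and is essentially the Gentili--Struppa proof: split on each slice into $\mathbb C_I$-valued components, use the (conjugated) Cauchy--Riemann equations to get holomorphic power series, then pin down the coefficients on the common real axis and glue via Proposition~\ref{P1}.

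One small remark: you split as $f_I=F_1+F_2J$ with $J$ on the right, which forces $F_2$ to satisfy the \emph{anti}-holomorphic equation and then requires the conjugation trick $\overline{z}^{\,n}J=Jz^n$ to move the power back to the right. The version actually recorded in the paper (Lemma~\ref{split-lemma}, adapted from \cite{Gra1}) uses $f_I=F+JG$ with $J$ on the left, which makes both $F$ and $G$ genuinely holomorphic and gives $f_I=\sum_n(a_n+Jb_n)\qu^n$ directly, with no conjugation step. Either decomposition works, but the latter is marginally cleaner and is the one the cited source uses.
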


\section{Complex Hermite polynomials}
For the construction and  analysis of the quaternionic Hermite polynomials, $H_n(\qu)$ and $H_{n,m}(\qu,\overline{\qu})$, it would be useful to first review some facts about their complex counterparts, specially since the results we obtain here in the  quaternionic case are parallel to those in the complex case. For a detailed discussion of the complex Hermite polynomials $H_n(z)$ and $H_{n,m}(z,\overline{z})$, we refer the reader to \cite{AFG,Ga,Ghan,Mat,Wil,VE}.
\subsection{The polynomials $H_n(z)$}
Let $z=x+iy\in\C, \;\; 0<s<1$,
\begin{equation}
d\nu(z)=d\nu(x,y)=\exp\left[-(1-s)x^2-(\frac{1}{s}-1)y^2\right]dxdy
\label{measure1}
\end{equation}
and
\begin{equation}
b_n(s)=\frac{\pi\sqrt{s}}{1-s}\left(2\frac{1+s}{1-s}\right)^n n!\, .
\label{normalization}
\end{equation}
Define
\begin{equation}\label{H1}
H_n(z)=n!\sum_{m=0}^{[n/2]}\frac{(-1)^m(2z)^{n-2m}}{m!(n-2m)!}=n!\sum_{m=0}^{[n/2]}C_{nm}z^{n-2m}.
\end{equation}
($[x]$ denoting the integer part of $x$)
and observe that $\overline{H_n(x+iy)}=H_n(x-iy)$. The $H_n$ are just the usual Hermite polynomials, written in terms of a complex argument. From \cite{Ga} and \cite{VE} we have
\begin{equation}\label{H2}
\int_{\R^2}H_n(x+iy)\overline{H_m(x+iy)}d\nu(x,y)=b_n(s)\; \delta_{nm}\;  .
\end{equation}
Define
\begin{equation}\label{F3}
\mathfrak{h}_{n,s}(z)= b_n(s)^{-\frac{1}{2}}e^{-z^2/2}H_n(z)
\end{equation}
and the Hilbert space of entire functions
$$\mathfrak{X}_s=\left\{f~\mid~\int_{\R^2}|f(x+iy)|^2\exp(sx^2-\frac{1}{s}y^2)dxdy<\infty\right\}.$$
It has been shown in \cite{VE} that $\{\mathfrak{h}_{n,s}(z)\}_{n=0}^{\infty}$ is an orthonormal basis of $\mathfrak{X}_s$ and $\mathfrak{X}_s$ is a reproducing kernel Hilbert space with the kernel
\begin{equation}\label{F4}
\mathfrak{K}_s(z,\overline{w})=\sum_{n}\mathfrak{h}_{n,s}(z)\overline{\mathfrak{h}_{n,s}(w)}=\frac{1-s^2}{2\pi s}\exp\left[-\frac{1+s^2}{4s}(z^2+\overline{w}^2)+\frac{1-s^2}{2s}z\overline{w}\right],
\end{equation}
where $\quad z,w\in\C$. The expression for the kernel  can be reduced to (see \cite{Ga})
\begin{equation}\label{F5}
\mathfrak{K}_s(z,\overline{z})=\frac{1-s^2}{2\pi s}\exp[-sx^2+\frac{1}{s}y^2],\quad z=x+iy.
\end{equation}
If we take
\begin{equation}\label{H3}
h_{n,s}(z)= b_n(s)^{-\frac{1}{2}}H_n(z)
\end{equation}
then for $z,w\in\mathbb{C}$,
\begin{equation}\label{H4}
K_s(z,\overline{w})=\sum_{n}h_{n,s}(z)\overline{h_{n,s}(w)}=\frac{1-s^2}{2\pi s}\exp\left[-\frac{(s-1)^2}{4s}(z^2+\overline{w}^2)+\frac{1-s^2}{2s}z\overline{w}\right].
\end{equation}
Further by replacing $w$ by $z$ we get
\begin{equation}\label{H5}
K_s(z,\overline{z})=\frac{1-s^2}{2\pi s}\exp\left[\frac{1-s}{2}x^2+\frac{s^2-3s+2}{2s}y^2\right],\quad z=x+iy.
\end{equation}
The kernel (\ref{H4}) is also a reproducing kernel. In fact it is easily seen that the corresponding
reproducing kernel Hilbert space, $\mathfrak H^s_{hol}$, which is again a Hilbert space of
analytic functions, and for which the vectors (\ref{H3}) form
an orthonormal basis, is a subspace of the Hilbert space $L^2 (\mathbb C , d\mu_s )$, where
\begin{equation}
  d\mu_s (x,y) = e^{-[(1-s)x^2 + (\frac 1s -1)y^2]}\; dx\; dy\; .
\label{s-measure}
\end{equation}
Similarly, the polynomials $h_{n,s} (\overline{z})$ would span a reproducing kernel Hilbert space
 $\mathfrak H_{a-hol} \subset L^2 (\mathbb C , d\mu_s )$ of
anti-analytic functions, with reproducing kernel
$\overline{K_s(z,\overline{w})}$.

\subsection{The polynomials $H_{n,m}(z,\overline{z})$}\label{subsec-comp-herm}
A second class of complex Hermite polynomials have been studied in \cite{Wil}. These are defined, for positive integers $n,m$, as
\begin{equation}\label{E1}
H_{n,m}(z,\overline{z})=(-2)^{n+m}\text{exp}(z\overline{z}/2)\left(\frac{\partial}{\partial z}\right)^n\left(\frac{\partial}{\partial\overline{z}}\right)^m\text{exp}(-z\overline{z}/2).
\end{equation}
The generating function for this polynomial is
\begin{equation}\label{E2}
\text{exp}[(a\overline{z}+\overline{a}z-a\overline{a})/2]=\sum\frac{(a/2)^n(\overline{a}/2)^m}{n!m!}H_{n,m}(z,\overline{z}).
\end{equation}
When $m\geq n$ and $n\geq 0$, the complex Hermite polynomials can also be written as
$$H_{n,m}(z,\overline{z})=\frac{m!}{(m-n)!}z^{m-n}(-2)^n{}_1F_1(-n,m-n+1,z\overline{z}/2).$$
These polynomials satisfy the orthogonality relation
\begin{equation}\label{E3}
\int_{-\infty}^{\infty}\int_{-\infty}^{\infty}\text{exp}(-z\overline{z}/2)H_{n,m}(z,
\overline{z})\overline{H_{\nu,\mu}(z,\overline{z})}\; dx\; dy =2\pi \delta_{n \nu}\delta_{m \mu}n!m!2^{n+m}.
\end{equation}
Now let
\begin{equation}\label{E4}
h_{n,m}(z,\overline{z})=(-1)^{n+m}\text{exp}(|z|^2)\left(\frac{\partial}{\partial z}\right)^n
\left(\frac{\partial}{\partial\overline{z}}\right)^m\text{exp}(-|z|^2)=\sum_{i=0}^{n}
\sum_{j=0}^{m}a_{ij}z^i\oz^j.
\end{equation}
From (\ref{E3}) (see also \cite{Nic,Ghan} ), the polynomials $h_{n,m}(z,\overline{z})$
form a
complete orthogonal system in the Hilbert space $L^2(\mathbb{C}, e^{-|z|^2}d\lambda)$, where
$d\lambda=\frac{1}{\pi}d^2z$ is the Lebesque measure on $\mathbb{C}$. For proof of completeness
see \cite{Ghan}. Also from (\ref{E3}) we have
\begin{equation}\label{E5}
\int_{\mathbb{C}}e^{-|z|^2}h_{n,m}(z,\overline{z})\overline{h_{n,m}(z,\overline{z})}\; d\lambda
=n!m!
\end{equation}
and thereby
$$\|h_{n,m}\|_{L^2}=\sqrt{n!m!}.$$
 The operator
\begin{equation}\label{E6}
\mathfrak{L}=-\frac{1}{4}\left\{4\frac{\partial^2}{\partial z\partial\oz}+
2\left(z\frac{\partial}{\partial z}-\oz\frac{\partial}{\partial\oz}\right)-\abs{z}^2\right\}
\end{equation}
 is the Hamiltonian of a nonrelativistic quantum particle moving on the plane under
 the action of a constant external magnetic field applied perpendicularly to the plane.
 The functions $e^{-\abs{z}^2/2}h_{m,n}(z,\oz)$ are the eigenfunctions of $\mathfrak{L}$ with
 eigenvalues $n+\frac{1}{2}$, each eigenvalue being infinitely degenerate. The degeneracy is
 given by  $m=0,1,2,\cdots$. For details see  \cite{Ghan,Mat}.\\

\section{Quaternionic Hermite polynomials}\label{sec-quherm-polyn}
In this section we define the quaternionic Hermite polynomials $H_n(\qu)$ and $H_{n,m}(\qu,\overline{\qu})$, by analogy with the complex polynomials introduced above.  We shall identify the set of all polynomials $H_n(\qu)$ as an orthogonal set in a $L^2$-space and similarly for the $H_{n,m}(\qu,\overline{\qu})$. We also obtain reproducing kernels using the polynomials $H_n(\qu)$ and $H_{n,m}(\qu,\overline{\qu})$ and the corresponding reproducing kernel Hilbert spaces.  Finally we look at an operator $\mathfrak{L}_H$ which can be considered as the quaternionic version of the Landau operator (\ref{E6}).

 It is well-known (see, e.g.,  \cite{Thi}) that any $\qu\in H$, in the $2\times 2$ matrix representation, can be written as
\begin{equation}\label{dc}
\qu=u_{\qu}Zu_{\qu}^{\dagger},
\end{equation} where
$$u_{\qu}=\left(\begin{array}{cc}
e^{i\psi/2}&0\\
0&e^{-i\psi/2}\end{array}\right)
\left(\begin{array}{cc}
\cos\frac{\phi}{2}&i\sin\frac{\phi}{2}\\
i\sin\frac{\phi}{2}&\cos\frac{\phi}{2}\end{array}\right)
\left(\begin{array}{cc}
e^{i\psi/2}&0\\
0&e^{-i\psi/2}\end{array}\right)\in SU(2),$$
and $Z=\left(\begin{array}{cc}
z&0\\
0&\oz\end{array}\right), \; z \in \mathbb C$. Since $SU(2)$ is a compact group, let $d\omega(u_{\qu})$ be the normalized Haar measure on it.
From the decomposition (\ref{dc}), $d\mu (\qu) := e^{-|z|^2}d\lambda d\omega(u_{\qu})$ is a measure on $H$ and $L^2_H(H, e^{-|z|^2}d\lambda d\omega(u_{\qu}))$ can be considered to be a {\em left quaternionic Hilbert space}, with an inner product defined as in (\ref{left-sc-prod}).

From (\ref{dc}), for any $i,j\in\mathbb{N}$ we have
$$\qu^i=u_{\qu}\left(\begin{array}{cc}
z^i&0\\
0&\oz^i\end{array}\right)u_{\qu}^{\dagger}\quad\text{and}\quad
{\overline{\qu}}^j=u_{\qu}\left(\begin{array}{cc}
\oz^j&0\\
0&z^j\end{array}\right)u_{\qu}^{\dagger}$$
and thereby,
\begin{equation}\label{E7}
\qu^i\overline{\qu}^j=u_{\qu}\left(\begin{array}{cc}
z^i\oz^j&0\\
0&\oz^iz^j\end{array}\right)u_{\qu}^{\dagger}.
\end{equation}

\subsection{The quaternionic Hermite polynomials $H_n(\qu)$}
 Let $d\eta(\qu)=d\mu_s(z)d\omega(u_{\qu})$, with $d\mu_s$ as in (\ref{s-measure}). Since
$$\qu^{n-2m}=u_{\qu}\left(\begin{array}{cc}z^{n-2m}&0\\0&\overline{z}^{n-2m}\end{array}\right)
u_{\qu}^{\dagger}$$
and the $C_{nm}$ in (\ref{H1}) are real numbers, we have
$$C_{nm}\qu^{n-2m}=u_{\qu}\left(\begin{array}{cc}C_{nm}z^{n-2m}&0\\0&C_{nm}\overline{z}^{(n-2m)}
\end{array}\right)u_{\qu}^{\dagger}$$
and thereby
$$n!\sum_{m=0}^{[n/2]}C_{nm}\qu^{n-2m}=u_{\qu}\left(\begin{array}{cc}n!\sum_{m=0}^{[n/2]}
C_{nm}z^{n-2m}&0\\0&n!\sum_{m=0}^{[n/2]}C_{nm}\overline{z}^{(n-2m)}\end{array}\right)
u_{\qu}^{\dagger}$$
That is
$$H_n(\qu)=u_{\qu}\left(\begin{array}{cc}H_n(z)&0\\0&H_n(\overline{z})\end{array}\right)
u_{\qu}^{\dagger}.$$
Observe that
$$\overline{H_n(\qu)}=H_n(\overline{\qu})=u_{\qu}\left(\begin{array}{cc}H_n(\overline{z})&0\\
0&H_n(z)\end{array}\right)u_{\qu}^{\dagger}.$$

Similarly, from (\ref{H1}) we easily see that
\begin{equation}
  H(\qu ) = n!\sum_{m=0}^{[n/2]}\frac{(-1)^m(2\qu )^{n-2m}}{m!(n-2m)!}\;,
\label{q-herm-exp}
\end{equation}
which means that they satisfy the same recursion relations,
\begin{equation}
   \qu H_n (\qu ) = \frac 12 H_{n+1}(\qu ) + n H_{n-1} (\qu )\; ,
\label{q-herm-rec}
\end{equation}
as the real Hermite polynomials $H_n (x)$, or also
\begin{equation}
H_{n+1} (\qu ) = 2\qu H_n(\qu ) - H^\prime_n(\qu )\; ,
\label{q-herm-rec2}
\end{equation}
the prime denoting the (Cullen) derivative.

Let
$$L_H^2(H,d\eta(\qu))=\left\{f:H\rightarrow H\;\; \left|\;\;\int_Hf(\qu)\overline{f(\qu)}d\eta(\qu)<\infty \right.\right\}.$$
This is a left quaternionic Hilbert space with a scalar product as in (\ref{left-sc-prod}).
\begin{theorem}\label{TH1}
The set $\left\{H_n(\qu)\;\;\left| \;\; \right.n\in\mathbb{N}\right\}$ is an orthogonal set in $L_H^2(H,d\eta(\qu))$.
\end{theorem}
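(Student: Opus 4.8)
The plan is to reduce the quaternionic orthogonality statement to the already-established complex orthogonality relation (\ref{H2}), by exploiting the decomposition $H_n(\qu) = u_{\qu}\,\mathrm{diag}(H_n(z), H_n(\overline z))\,u_{\qu}^{\dagger}$ together with the product measure structure $d\eta(\qu) = d\mu_s(z)\,d\omega(u_{\qu})$. The key observation is that the $SU(2)$-part $u_{\qu}$ and the $z$-part decouple in the measure, so I expect the computation of $\langle H_n \mid H_m\rangle$ to factor into a Haar-integral over $SU(2)$ and a $d\mu_s$-integral over $\mathbb C$.

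\emph{First} I would write out the inner product explicitly. Using the left scalar product (\ref{left-sc-prod}) and the matrix form of $H_n(\qu)$, I would compute the integrand $H_n(\qu)\overline{H_m(\qu)}$. Since $\overline{H_m(\qu)} = H_m(\overline{\qu}) = u_{\qu}\,\mathrm{diag}(H_m(\overline z), H_m(z))\,u_{\qu}^{\dagger}$ and $u_{\qu}^{\dagger}u_{\qu} = \sigma_0$, the middle factors collapse and one obtains
$$
H_n(\qu)\overline{H_m(\qu)} = u_{\qu}\left(\begin{array}{cc} H_n(z)H_m(\overline z) & 0 \\ 0 & H_n(\overline z)H_m(z)\end{array}\right)u_{\qu}^{\dagger}.
$$
\emph{Next} I would integrate over $SU(2)$ with respect to the normalized Haar measure $d\omega(u_{\qu})$, holding $z$ fixed. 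The diagonal matrix in the middle is a function of $z$ alone, so the Haar integral acts only on the conjugation $M \mapsto u_{\qu} M u_{\qu}^{\dagger}$. By invariance of Haar measure, this projects $M$ onto its component commuting with all of $SU(2)$, i.e.\ onto a scalar multiple of $\sigma_0$ equal to $\tfrac12\mathrm{Tr}(M)\,\sigma_0$; thus
$$
\int_{SU(2)} u_{\qu} M u_{\qu}^{\dagger}\, d\omega(u_{\qu}) = \tfrac12\bigl[H_n(z)H_m(\overline z) + H_n(\overline z)H_m(z)\bigr]\sigma_0.
$$

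\emph{Then} I would integrate the remaining scalar expression against $d\mu_s(z)$ over $\mathbb C$. Recalling $\overline{H_m(z)} = H_m(\overline z)$ (noted after (\ref{H1})) and that $d\mu_s$ coincides with the measure $d\nu$ up to the conventions in (\ref{measure1}) and (\ref{s-measure}), both surviving terms are of the form $\int H_n(z)\overline{H_m(z)}\,d\mu_s$, which by the complex orthogonality relation (\ref{H2}) equals $b_n(s)\,\delta_{nm}$. Collecting the factor $\tfrac12\sigma_0$, I would conclude $\langle H_n \mid H_m\rangle = b_n(s)\,\delta_{nm}\,\sigma_0$ (a real scalar times the identity, as befits a quaternionic norm), establishing orthogonality.

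\emph{The main obstacle} I anticipate is justifying the Haar-averaging step rigorously: I must verify that $\int_{SU(2)} u M u^{\dagger}\,d\omega(u) = \tfrac12\mathrm{Tr}(M)\sigma_0$, which follows from Schur's lemma applied to the conjugation representation (the only $SU(2)$-invariant $2\times2$ matrices are scalar), but one should confirm that the parametrization of $u_{\qu}$ in (\ref{dc}) ranges over enough of $SU(2)$ for the averaging to be genuine and that the measure $d\omega$ is the full normalized Haar measure. A secondary point to check is the precise matching between $d\mu_s$ and the measure $d\nu$ appearing in (\ref{H2}), to be sure the normalization constant $b_n(s)$ transfers correctly; this is routine but must be stated.
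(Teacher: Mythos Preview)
Your proposal is correct and follows essentially the same route as the paper: use the decomposition $H_n(\qu)=u_{\qu}\,\mathrm{diag}(H_n(z),H_n(\overline z))\,u_{\qu}^{\dagger}$, factor the product measure, and reduce to the complex orthogonality relation (\ref{H2}). The only difference is that the paper integrates over $z$ first, so the diagonal matrix collapses to the scalar $b_n(s)\,\delta_{nm}\,\sigma_0$ \emph{before} the $SU(2)$-integral, making the latter trivially $\int u_{\qu}\sigma_0 u_{\qu}^{\dagger}\,d\omega=\sigma_0$ and bypassing the Schur-averaging identity you invoke.
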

\begin{proof}
Consider
\begin{eqnarray*}
& &\int_HH_n(\qu)\overline{H_m(\qu)}d\eta(\qu)\\
&=&\int_{SU(2)}u_{\qu}\left(\begin{array}{cc}
\int_{\R^2}H_n(z)H_m(\overline{z})d\nu(x,y)&0\\0&\int_{\R^2}H_m(z)H_n(\overline{z})d\nu(x,y)\end{array}\right)u_{\qu}^{\dagger}d\omega(u_{\qu})\\
&=&\int_{SU(2)}u_{\qu}\left(\begin{array}{cc}
b_n(s)n!\delta_{mn}&0\\
0&b_m(s)m!\delta_{mn}\end{array}\right)u_{\qu}^{\dagger}d\omega(u_{\qu})\\
&=&\int_{SU(2)}u_{\qu}u_{\qu}^{\dagger}d\omega(u_{\qu})b_n(s)n!\delta_{mn}\\
&=&b_n(s)n!\delta_{nm}.
\end{eqnarray*}
\end{proof}
Redefine
\begin{equation}\label{C3}
H_n^s (\qu )=b_n(s)^{-\frac{1}{2}}H_n(\qu),
\end{equation}
then $H_n^s(\qu)\in L_H^2(H,d\eta(\qu))$ and
\begin{equation}\label{H6}
\int_HH_n^s(\qu)\overline{H_m^s(\qu)}d\eta(\qu)=\delta_{nm}.
\end{equation}
Define the kernel
\begin{equation}\label{H7}
K_s(\qu_1,\overline{\qu_2})=\sum_{n=0}^{\infty}H_n^s(\qu_1)\overline{H_n^s(\qu_2)},
\end{equation}
then from (\ref{H4}) $K_s(\qu_1,\qu_2)$ is a reproducing kernel and the corresponding reproducing kernel Hilbert space is
\begin{equation}\label{C1}
A_s=\overline{\text{span}\{H_n^s(\qu) \mid n\in\mathbb{N}\}},
\end{equation}
the span and its closure being taken under left multiplication by quaternionic constants.
Note that the above reproducing kernel is the quaternionic equivalent of the kernel (\ref{H4}) and the
Hilbert space $A_s$ the equivalent of $\mathfrak H_{hol}^s$, generated by that kernel.
\begin{lemma}\label{L1}
$K_s(\qu,\overline{\qu})=K_s(z,\overline{z})\mathbb{I}_2.$
\end{lemma}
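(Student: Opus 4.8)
The plan is to evaluate the kernel $K_s(\qu,\overline{\qu})$ directly from its defining series (\ref{H7}) by inserting the $2\times 2$ matrix form of $H_n^s(\qu)$ established just above the lemma, and then exploiting the fact that the conjugating factor $u_{\qu}$ lies in $SU(2)$ and is therefore unitary. Using (\ref{C3}) together with the matrix expressions already derived for $H_n(\qu)$ and for $\overline{H_n(\qu)}=H_n(\overline{\qu})$, I would first record
\begin{align*}
H_n^s(\qu) &= b_n(s)^{-\frac12}\, u_{\qu}\begin{pmatrix} H_n(z) & 0 \\ 0 & H_n(\oz) \end{pmatrix} u_{\qu}^{\dagger}, \\
\overline{H_n^s(\qu)} &= b_n(s)^{-\frac12}\, u_{\qu}\begin{pmatrix} H_n(\oz) & 0 \\ 0 & H_n(z) \end{pmatrix} u_{\qu}^{\dagger}.
\end{align*}

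Next I would form the product $H_n^s(\qu)\,\overline{H_n^s(\qu)}$ term by term. Since $u_{\qu}\in SU(2)$ obeys $u_{\qu}^{\dagger}u_{\qu}=\mathbb{I}_2$, the two adjacent inner factors cancel, leaving the product of the two diagonal matrices conjugated by $u_{\qu}$. The diagonal entries are the complex scalars $H_n(z)$ and $H_n(\oz)$, which commute, and because $\oz=\overline{z}$ we have $H_n(\oz)=\overline{H_n(z)}$; hence both diagonal entries of the matrix product equal $|H_n(z)|^2$. The central factor thus collapses to the scalar $|H_n(z)|^2$ times $\mathbb{I}_2$, and the outer unitaries then act trivially on it: $u_{\qu}\bigl(|H_n(z)|^2\mathbb{I}_2\bigr)u_{\qu}^{\dagger}=|H_n(z)|^2\, u_{\qu}u_{\qu}^{\dagger}=|H_n(z)|^2\,\mathbb{I}_2$.

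Summing over $n$ then yields $K_s(\qu,\overline{\qu})=\bigl(\sum_{n} b_n(s)^{-1}|H_n(z)|^2\bigr)\mathbb{I}_2$, and I would finish by recognizing the scalar in parentheses: with $h_{n,s}(z)=b_n(s)^{-\frac12}H_n(z)$ from (\ref{H3}), it is exactly $\sum_n h_{n,s}(z)\overline{h_{n,s}(z)}=K_s(z,\overline{z})$ as in (\ref{H4}) with $w=z$. The only point demanding care is the noncommutativity of quaternionic (matrix) multiplication: the factors $u_{\qu}$ cannot be slid past one another freely, so the whole argument hinges on the observation that the inner factor reduces to a complex scalar multiple of $\mathbb{I}_2$ \emph{before} the outer unitaries are applied, after which that scalar passes unobstructed through $u_{\qu}u_{\qu}^{\dagger}=\mathbb{I}_2$. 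I do not anticipate any genuine difficulty beyond this bookkeeping.
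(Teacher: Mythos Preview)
Your proposal is correct and follows essentially the same route as the paper's own proof: both compute $H_n^s(\qu)\overline{H_n^s(\qu)}$ via the $u_{\qu}$-conjugated diagonal matrix form, observe that the two diagonal entries coincide as the complex scalar $|H_n(z)|^2$, and then sum over $n$ to recover $K_s(z,\overline{z})\,\mathbb{I}_2$. Your write-up is slightly more explicit about the role of $u_{\qu}^{\dagger}u_{\qu}=\mathbb{I}_2$ and about why the central factor commutes past the outer unitaries, but the argument is the same.
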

\begin{proof}
Since $$H_n(\qu)=u_{\qu}\left(\begin{array}{cc}H_n(z)&0\\0&H_n(\overline{z})\end{array}\right)
u_{\qu}^{\dagger},$$
we have
$$H_n(\qu)\overline{H_n(\qu)}=u_{\qu}\left(\begin{array}{cc}H_n(z)\overline{H_n(z)}&0\\0&H_n(z)\overline{H_n(z)}\end{array}\right)u_{\qu}^{\dagger}.$$
Since $b_n(s)$ are real numbers,
\begin{eqnarray*}
b_n(s)^{-1}H_n(\qu)\overline{H_n(\qu)}&=&u_{\qu}\left(\begin{array}{cc}b_n(s)^{-1}H_n(z)\overline{H_n(z)}&0\\
0&b_n(s)^{-1}H_n(z)\overline{H_n(z)}\end{array}\right)u_{\qu}^{\dagger}\\
&=&u_{\qu}\left(\begin{array}{cc}h_{n,s}(z)\overline{h_{n,s}(z)}&0\\0&h_{n,s}(z)\overline{h_{n,s}(z)}\end{array}\right)u_{\qu}^{\dagger}.
\end{eqnarray*}
Therefore
\begin{eqnarray*}
\sum_{n}b_n(s)^{-1}H_n(\qu)\overline{H_n(\qu)}&=&u_{\qu}\left(\begin{array}{cc}\sum_nh_{n,s}(z)\overline{h_{n,s}(z)}&0\\0&\sum_nh_{n,s}(z)\overline{h_{n,s}(z)}\end{array}\right)u_{\qu}^{\dagger}\\
&=&u(\qu)\left(\begin{array}{cc}K_s(z,z)&0\\0&K_s(z,z)\end{array}\right)u_{\qu}^{\dagger}\\
&=&K_s(z,\overline{z})\mathbb{I}_2.
\end{eqnarray*}
That is,
\begin{equation}\label{C2}
K_s(\qu,\overline{\qu})=K_s(z,\overline{z})\mathbb{I}_2.
\end{equation}
\end{proof}

\subsection{The quaternionic Hermite polynomials $H_{n,m}(\qu,\overline{\qu})$}
The exponential series
$$e^{\qu}=\sum_{n=0}^{\infty}\frac{\qu^n}{n!},\quad \qu\in H,$$
converges absolutely, and uniformly on compact subsets with respect to the norm of $H$ \cite{Ebb} (p 204). Thereby
$$e^{-|\qu|^2}=e^{-\qu\overline{\qu}}=\sum_{n=0}^{\infty}(-1)^n\frac{(\qu\overline{\qu})^n}{n!}$$
converges uniformly. Further $\qu$ and $\overline{\qu}$ commute and real numbers commute with quaternions. Therefore, as in the complex case, as an extension of complex hermite polynomials, using Definition \ref{D3} we get
\begin{eqnarray}\label{E8}
h_{n,m}(\qu,\overline{\qu})&=&(-1)^{n+m}e^{|\qu|^2}\frac{\partial^{n+m}}
{\partial\qu^n\partial\overline{\qu}^m}e^{-|\qu|^2}=\sum_{i=0}^{n}\sum_{j=0}^{m}a_{ij}\qu^i\overline{\qu}^j\\
 &=& n!m!\sum_{j=0}^{\text{min}\{n,m\}}\frac{(\overline{\qu})^{n-j}}{(n-j)!}\frac{\qu^{m-j}}{(m-j)!},
\end{eqnarray}
where the $a_{ij}$ are real coefficients and the derivatives should be understood in the Cullen sense. Further from (\ref{E7}) we have
\begin{equation*}
a_{ij}\qu^i\overline{\qu}^j=u_{\qu}\left(\begin{array}{cc}
a_{ij}z^i\oz^j&0\\
0&a_{ij}\oz^iz^j\end{array}\right)u_{\qu}^{\dagger}.
\end{equation*}
Therefore
\begin{equation*}
\sum_{i=0}^{n}\sum_{j=0}^{m}a_{ij}\qu^i\overline{\qu}^j=u_{\qu}\left(\begin{array}{cc}
\sum_{i=0}^{n}\sum_{j=0}^{m}a_{ij}z^i\oz^j&0\\
0&\sum_{i=0}^{n}\sum_{j=0}^{m}a_{ij}\oz^iz^j\end{array}\right)u_{\qu}^{\dagger}.
\end{equation*}
Thereby from (\ref{E8}) we get
\begin{equation}\label{E9}
h_{n,m}(\qu,\overline{\qu})=u_{\qu}\left(\begin{array}{cc}
h_{n,m}(z,\oz)&0\\
0&h_{m,n}(z,\oz)\end{array}\right)u_{\qu}^{\dagger}.
\end{equation}
We can see from (\ref{E4}) that $h_{n,m}(\qu,\overline{\qu})$ is obtained by substituting $z$ by $\qu$ in the definition (\ref{E4}) of $h_{n,m}(z,\oz)$. Thereby, using the fact that $h_{n,m}(z,\oz)\in L^2(\mathbb{C}, e^{-\abs{z}^2}d\lambda)$ and $SU(2)$ is a compact group, we have $h_{n,m}(\qu,\overline{\qu})\in L^2_H(H, e^{-\abs{z}^2}d\lambda d\omega(u_{\qu}))$.
\begin{theorem}\label{T1}
The set $\left\{h_{m,n}(\qu,\overline{\qu})~\vert~m,n\in\mathbb{N}\right\}$ is an orthogonal set in $L_H^2(H, e^{-\abs{z}^2}$ $d\lambda d\omega(u_{\qu})).$
\end{theorem}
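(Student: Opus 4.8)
The plan is to follow the pattern of the proof of Theorem \ref{TH1}, exploiting the block-diagonalization (\ref{E9}) together with the orthogonality of the complex Hermite polynomials $h_{n,m}(z,\oz)$. Writing out the left scalar product (\ref{left-sc-prod}), I would compute, for indices $(n,m)$ and $(\nu,\mu)$,
$$\langle h_{n,m}\mid h_{\nu,\mu}\rangle=\int_H h_{n,m}(\qu,\overline{\qu})\,\overline{h_{\nu,\mu}(\qu,\overline{\qu})}\;e^{-\abs{z}^2}d\lambda\,d\omega(u_{\qu}).$$
By (\ref{E9}) each factor has the matrix form $u_{\qu}Du_{\qu}^{\dagger}$ with $D$ diagonal. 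Since quaternionic conjugation coincides with the matrix adjoint (see (\ref{q3})) and $u_{\qu}\in SU(2)$, the conjugate of the second factor is $u_{\qu}D_{\nu,\mu}^{\dagger}u_{\qu}^{\dagger}$, where $D_{\nu,\mu}^{\dagger}=\mathrm{diag}\bigl(\overline{h_{\nu,\mu}(z,\oz)},\,\overline{h_{\mu,\nu}(z,\oz)}\bigr)$.

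Using $u_{\qu}^{\dagger}u_{\qu}=\mathbb{I}_2$, the product telescopes to
$$h_{n,m}(\qu,\overline{\qu})\,\overline{h_{\nu,\mu}(\qu,\overline{\qu})}=u_{\qu}\left(\begin{array}{cc}h_{n,m}(z,\oz)\overline{h_{\nu,\mu}(z,\oz)}&0\\0&h_{m,n}(z,\oz)\overline{h_{\mu,\nu}(z,\oz)}\end{array}\right)u_{\qu}^{\dagger}.$$
Because the measure $e^{-\abs{z}^2}d\lambda\,d\omega(u_{\qu})$ is a product measure in the independent variables $z$ and $u_{\qu}$, I would carry out the integration over $\C$ first, for each fixed $u_{\qu}$, and invoke the orthogonality of the $h_{n,m}(z,\oz)$ in $L^2(\C,e^{-\abs{z}^2}d\lambda)$ recorded after (\ref{E4}), with norms from (\ref{E5}), namely
$$\int_{\C}e^{-\abs{z}^2}h_{n,m}(z,\oz)\overline{h_{\nu,\mu}(z,\oz)}\,d\lambda=n!\,m!\,\delta_{n\nu}\delta_{m\mu}.$$
The $(1,1)$ entry then produces $n!\,m!\,\delta_{n\nu}\delta_{m\mu}$, while the $(2,2)$ entry produces $m!\,n!\,\delta_{m\mu}\delta_{n\nu}$; these two values coincide.

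This coincidence is the crux of the argument, and the one point deserving care: the index swap $(n,m)\leftrightarrow(m,n)$ forced on the two diagonal blocks by (\ref{E9}) is exactly what makes the inner $\C$-integral equal to the \emph{scalar} matrix $n!\,m!\,\delta_{n\nu}\delta_{m\mu}\,\mathbb{I}_2$. A scalar matrix is a real quaternion, so it commutes through $u_{\qu}$ and $u_{\qu}^{\dagger}$, reducing the remaining Haar integral to the trivial $\int_{SU(2)}u_{\qu}u_{\qu}^{\dagger}\,d\omega(u_{\qu})=\mathbb{I}_2$. Hence
$$\langle h_{n,m}\mid h_{\nu,\mu}\rangle=n!\,m!\,\delta_{n\nu}\delta_{m\mu},$$
a real number, which establishes orthogonality of the set and simultaneously records $\|h_{n,m}\|^2=n!\,m!$. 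The interchange of the $\C$- and $SU(2)$-integrations is justified by the fact, already noted before the statement, that $h_{n,m}(\qu,\overline{\qu})\in L_H^2$, so no convergence difficulty arises; everything else is the bookkeeping carried out above.
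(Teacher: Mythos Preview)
Your proof is correct and follows exactly the pattern of the paper's proofs of Theorem~\ref{TH1} and Theorem~\ref{T2}; the paper itself does not write out a proof of Theorem~\ref{T1} but simply cites \cite{Thi2}. Your computation is precisely the general-index version of the norm computation carried out in Theorem~\ref{T2}, so nothing is missing.
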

\begin{proof}
See \cite{Thi2}.
\end{proof}
\begin{theorem}\label{T2}
$\|h_{m,n}(\qu,\overline{\qu})\|_{L_H^2}=\sqrt{n!m!}$ in $L_H^2(H, e^{-\abs{z}^2}d\lambda d\omega(u_{\qu})).$
\end{theorem}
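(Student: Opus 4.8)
The plan is to compute the squared norm $\|h_{m,n}(\qu,\overline{\qu})\|_{L_H^2}^2 = \langle h_{m,n}\mid h_{m,n}\rangle = \int_H h_{m,n}(\qu,\overline{\qu})\,\overline{h_{m,n}(\qu,\overline{\qu})}\,e^{-\abs{z}^2}\,d\lambda\,d\omega(u_{\qu})$ directly, using the $2\times 2$ matrix factorization (\ref{E9}) and following exactly the strategy already employed in the proof of Theorem \ref{TH1}. The whole computation should reduce to the known complex identity (\ref{E5}) together with the unitarity of $u_{\qu}\in SU(2)$ and the normalization of the Haar measure.

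First I would write, from (\ref{E9}) with the roles of $n$ and $m$ interchanged,
\begin{equation*}
h_{m,n}(\qu,\overline{\qu})=u_{\qu}\begin{pmatrix} h_{m,n}(z,\oz) & 0 \\ 0 & h_{n,m}(z,\oz)\end{pmatrix}u_{\qu}^{\dagger},
\end{equation*}
and then take the quaternionic conjugate. Since $\overline{\qu}=\qu^{\dagger}$ and $u_{\qu}^{\dagger}u_{\qu}=\I_2$, conjugation amounts to taking the matrix adjoint, so that $\overline{h_{m,n}(\qu,\overline{\qu})}=u_{\qu}\,\mathrm{diag}\bigl(\overline{h_{m,n}(z,\oz)},\overline{h_{n,m}(z,\oz)}\bigr)\,u_{\qu}^{\dagger}$. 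Forming the product, the inner factors $u_{\qu}^{\dagger}u_{\qu}$ collapse to $\I_2$ and I obtain
\begin{equation*}
h_{m,n}(\qu,\overline{\qu})\,\overline{h_{m,n}(\qu,\overline{\qu})}=u_{\qu}\begin{pmatrix} \abs{h_{m,n}(z,\oz)}^2 & 0 \\ 0 & \abs{h_{n,m}(z,\oz)}^2\end{pmatrix}u_{\qu}^{\dagger}.
\end{equation*}

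Next I would integrate over $\C$ against $e^{-\abs{z}^2}\,d\lambda$. By (\ref{E5}) both diagonal entries integrate to $n!\,m!$; indeed $\int_{\C} e^{-\abs{z}^2}\abs{h_{m,n}}^2\,d\lambda$ and $\int_{\C} e^{-\abs{z}^2}\abs{h_{n,m}}^2\,d\lambda$ coincide, since $\|h_{m,n}\|_{L^2}^2=m!\,n!=n!\,m!=\|h_{n,m}\|_{L^2}^2$. Hence the diagonal matrix becomes $n!\,m!\,\I_2$, and because this is a scalar multiple of the identity it commutes out of the $SU(2)$ sandwich: $u_{\qu}(n!\,m!\,\I_2)u_{\qu}^{\dagger}=n!\,m!\,\I_2$. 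Integrating over $SU(2)$ with the normalized Haar measure then leaves $n!\,m!\,\I_2$ unchanged, so $\langle h_{m,n}\mid h_{m,n}\rangle = n!\,m!\,\I_2$, which as a real norm is $n!\,m!$, giving $\|h_{m,n}(\qu,\overline{\qu})\|_{L_H^2}=\sqrt{n!\,m!}$.

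The computation is essentially routine and I expect no serious obstacle; the only points requiring care are the bookkeeping of the quaternionic conjugate (using $\overline{\qu}=\qu^{\dagger}$ to turn conjugation into the matrix adjoint, and respecting the left/right ordering dictated by the left-quaternionic inner product (\ref{left-sc-prod})) and the observation that the reduction to a multiple of $\I_2$ is precisely what renders the $SU(2)$-integration trivial. This is the same mechanism that produced the clean factor $\delta_{nm}$ in Theorem \ref{TH1}.
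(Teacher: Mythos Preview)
Your proposal is correct and follows essentially the same approach as the paper: use the decomposition (\ref{E9}) to reduce the quaternionic integral to a diagonal $2\times 2$ matrix whose entries are the complex norms computed by (\ref{E5}), observe that both diagonal entries equal $n!m!$, and then let the $SU(2)$-integration collapse via $u_{\qu}u_{\qu}^{\dagger}=\I_2$. The paper's proof is exactly this computation, presented with slightly less commentary.
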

\begin{proof} From (\ref{E5}) and (\ref{E9}) we have
\begin{eqnarray*}
&&\int_{H}h_{n,m}(\qu,\overline{\qu})\overline{h_{n,m}(\qu,\overline{\qu})}e^{-\abs{z}^2}d\lambda d\omega(u_{\qu})\\
&=&\int_{SU(2)}\int_{\mathbb{C}}u_{\qu}\left(\begin{array}{cc}
h_{n,m}(z,\oz)\overline{h_{n,m}(z,\oz)}&0\\
0&h_{m,n}(z,\oz)\overline{h_{m,n}(z,\oz)}\end{array}\right)u_{\qu}^{\dagger}e^{-\abs{z}^2}d\lambda d\omega(u_{\qu})\\
&=&\int_{SU(2)}u_{\qu}\left(\begin{array}{cc}
\int_{\mathbb{C}}h_{n,m}(z,\oz)\overline{h_{n,m}(z,\oz)}e^{-\abs{z}^2}d\lambda&0\\
0&\int_{\mathbb{C}}h_{m,n}(z,\oz)\overline{h_{m,n}(z,\oz)}e^{-\abs{z}^2}d\lambda\end{array}\right)\\
&\times & u_{\qu}^{\dagger}
d\omega(u_{\qu})\\
&=&n!m!\int_{SU(2)}u_{\qu}u_{\qu}^{\dagger}
d\omega(u_{\qu})\\
&=&n!m!\mathbb{I}_2.
\end{eqnarray*}
\end{proof}

 From \cite{Ghan,Mat}, (\ref{E6}) and (\ref{E9}) it is clear that the functions
 $e^{-\abs{\qu}^2/2}h_{n,m}(\qu,\overline{\qu})$ are the eigenfunctions of the operator
$$\mathfrak{L}_{H}=u_{\qu}\left(\begin{array}{cc}\mathfrak{L}&0\\0&\overline{\mathfrak{L}}
\end{array}\right)u_{\qu}^{\dagger}$$
with spectrum $n+\frac{1}{2}$, each level being infinitely degenerate ($m=0,1,2,3,...\;$).
Even though this operator can be considered as the quaternionic version of the complex
Landau operator, we do not have at the moment a proper physical understanding of it.

\section{Coherent states}\label{sec-CS}
In this section we define CS over quaternionic Hilbert spaces and in particular CS arising from quaternionic Hilbert spaces of (slice) regular functions. As examples we build CS using the quaternionic Hermite polynomials $H_n(\qu)$ and $H_{n,m}(\qu,\overline{\qu})$.

\subsection{The general construction}\label{subsec-gen-cons}
Coherent states may be built on quaternionic Hilbert spaces, in more or less the same way as
was outlined in Section \ref{sec_intro}, for coherent states on complex Hilbert spaces. Indeed,
let $V^L_H$ be a left quaternionic Hilbert space whose dimension could be finite or countably
infinite and let $\phi_m , \;\; m =0,1,2, \ldots$, be an orthonormal basis of this space.
Let $X$ be a locally compact space and $\mu$ a (Radon) measure on it. Consider a set of functions
$\Phi_m : X \longrightarrow H , \;\; m =0,1,2, \ldots ,$ of the same cardinality as the dimension
of $V^L_H$, and which satisfy the two conditions,
\begin{itemize}
\item[1.] $ 0 < \mathcal N (x) := \sum_m\vert\Phi_m (x)\vert^2 < \infty$, for all $x \in X$.

\medskip
\item[2.] $\int_X \Phi_m (x)\overline{\Phi_n (x)}\; d\mu (x) = \delta_{m n}$, for all $m$ and $n$.
    \end{itemize}

A family of coherent states $\{ \eta_x \mid x \in X\}\subset V^L_H$ can now be defined to be the vectors,
\begin{equation}
\eta_x = \mathcal N (x)^{- \frac 12} \sum_m \Phi_m (x) \phi_m\; .
\label{quat-CS}
\end{equation}

  By construction, these coherent states are seen to be normalized, i.e., $\Vert \eta_x \Vert^2 =1$, for all $x\in X$, and to satisfy the resolution of the identity,
$$ \int_X \langle f \mid \eta_x \rangle \langle \eta_x \mid g \rangle \; \mathcal N (x)d\mu (x)\;
= \langle f \mid g\rangle \; , \qquad f,g \in V^L_H \; .$$
Moreover, taking $L^2_H (X, d\mu )$ to be a left quaternionic Hilbert space, the map
\begin{equation}
W: V^L_H \longrightarrow L^2_H (X, d\mu ), \quad \text{with} \quad  W f (x) = \mathcal N (x)^{\frac 12}\langle f\mid \eta_x\rangle_{V^L_H}
\label{CS-isom}
\end{equation}
 is a linear isometry onto a closed subspace
 $$ \mathfrak H_K := WV^L_H \subset  L^2_H (X,  d\mu ).$$
 The subspace $\mathfrak H_K$ is a {\em reproducing kernel Hilbert space}, with reproducing kernel
\begin{equation}
  K: X \times X \longrightarrow H, \qquad K(y,x ) = \left[\mathcal N (y)\; \mathcal N (x)\right]^{\frac 12}\; \langle \eta_y \mid \eta_x\rangle = \sum_m \Phi_m (y)\overline{\Phi_m (x)}\; .
\label{quat-repker}
\end{equation}
Thus, if $F \in \mathfrak H_K$, so that $F(x) = \mathcal N(x)^{\frac 12}\langle f \mid \eta_x\rangle$, for some $f\in V^L_K$, then
$$
 \int_X F(y) K(x, y)\; d\mu (y) = F(x),$$
for all $x \in X$. This also means that for each $x\in X$, the {\em evaluation map}, $E_x : \mathfrak H_K \longrightarrow H$, with $E_x (F) = F(x)$, is continuous and
\begin{equation}
\vert F(x)\vert \leq \mathcal N(x)^{\frac 12}\Vert f\Vert_{V^L_H} = \mathcal N(x)^{\frac 12}\Vert F\Vert_{\mathfrak H_K}.
\label{eval-est}
\end{equation}

All these results are familiar from the theory of coherent states on complex Hilbert spaces. Thus, entirely analogous results hold on quaternionic Hilbert spaces.

  Suppose, in particular, that $X = \mathcal D$ (some open ball in $H$, centered at the origin) and that the functions $\Phi_m, \;\; m =0,1,2, \ldots ,$ are elements in $L^2_H (\mathcal D, d\mu )$, which are regular functions, whose Taylor expansions (around the origin) have {\em real coefficients} (for example, they could be normalized polynomials, of degree $m$ in the quaternionic variable $\qu$, with real coefficients). Then, for any $f \in V^L_H$ the transformed function $Wf \in L^2_H (\mathcal D, d\mu )$ is a series in the conjugate variable $\overline{\qu}$  i.e., it  has the form
  $$ Wf (\overline{\qu }) = \sum_m \alpha_m \overline{\qu}^m , \qquad \alpha_m \in H\; , $$
the sum converging in the $L^2$-norm.  We now show that the above series is in fact a {\em right  anti-regular} function (i.e., regular in the variable  $\overline{\qu}$). For this we need the following lemma, which is an adaptation of Lemma 2.5 in \cite{Gra1}.
\begin{lemma}\label{split-lemma}
If $f$ is a right regular function on $B = B(0, R)$, then for every $I \in \mathbb S$ and every $J$ in $\mathbb S$, perpendicular to $I$, there exist two holomorphic functions $F, G : B\cap L_I \longrightarrow L_I$ such that the restriction of $f$ to $L_I$ can be split as the sum
\begin{equation}
  f_I (z ) = F(z) + J G(z), \qquad z = x + y I \in L_I\; .
\label{splitting}
\end{equation}
\end{lemma}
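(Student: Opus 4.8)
The plan is to mimic the classical Splitting Lemma (Lemma~2.5 of \cite{Gra1}), adjusting the placement of the transverse imaginary unit so that it becomes compatible with \emph{right}, rather than left, regularity. First I would fix $I\in\mathbb S$ and $J\in\mathbb S$ with $J\perp I$, and record the structural fact that $IJ$ is then a third imaginary unit orthogonal to both, so that $\{1,I,J,IJ\}$ is a real orthonormal basis of $H$. Consequently $H=L_I\oplus JL_I$ is a real direct sum, and every quaternion has a unique expression $\alpha+J\beta$ with $\alpha,\beta\in L_I$. Applying this decomposition pointwise to the restriction $f_I$ produces functions $F,G:B\cap L_I\to L_I$ with $f_I(z)=F(z)+JG(z)$; since $f$ is real differentiable and its restriction to $L_I$ has continuous partial derivatives (part of Definition~\ref{D2}), so do $F$ and $G$, and the uniqueness of the decomposition makes $F,G$ well defined.

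Next I would substitute $f_I=F+JG$ into the right slice-regularity condition \eqref{rightslicereg}, that is, $\partial_x f_I+(\partial_y f_I)I=0$. The only algebraic inputs are the commutativity of the field $L_I\cong\mathbb C$ and the fact that $J$ is kept on the \emph{left} throughout (this left placement is exactly the adaptation distinguishing the right-regular case from the left-regular case of \cite{Gra1}, where $J$ sits on the right and one instead invokes $\alpha J=J\overline{\alpha}$). Using associativity and $(\partial_y G)\,I=I\,\partial_yG$ for $L_I$-valued terms, the regularity operator becomes
\[
\big[\partial_xF+I\,\partial_yF\big]+J\big[\partial_xG+I\,\partial_yG\big]=0,
\]
where the first bracket lies in $L_I$ and the second in $JL_I$. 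Because the sum $H=L_I\oplus JL_I$ is direct, both brackets must vanish separately.

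Finally, each vanishing bracket is precisely the Cauchy--Riemann system for an $L_I$-valued function under the identification $I\leftrightarrow i$: the relation $\partial_xF+I\,\partial_yF=0$ says $F$ is holomorphic on $B\cap L_I$, and the relation $\partial_xG+I\,\partial_yG=0$ says the same for $G$. This yields the desired splitting $f_I=F+JG$ with $F,G:B\cap L_I\to L_I$ holomorphic.

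I expect the only genuine obstacle to be the noncommutative bookkeeping in the middle step: one must keep $J$ on the correct side throughout, move the trailing $I$ across only $L_I$-valued factors (where it commutes), and confirm that the $L_I$- and $JL_I$-components really do separate after the regularity operator is applied. Everything else---uniqueness of the $\alpha+J\beta$ decomposition and the transfer of differentiability from $f_I$ to $F$ and $G$---is routine.
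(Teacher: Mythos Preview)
Your proposal is correct and is exactly the adaptation the paper has in mind: the paper does not supply its own proof of this lemma but simply points to Lemma~2.5 of \cite{Gra1}, and your argument is precisely that proof with the imaginary unit $J$ placed on the left so that the right-regularity operator $\partial_x f_I+(\partial_y f_I)I$ splits cleanly along $H=L_I\oplus JL_I$. The bookkeeping you flag---moving the trailing $I$ only across $L_I$-valued factors---is handled correctly.
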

Note that if $f$ were to be an anti-regular function, a similar splitting into two anti-holomorphic functions would hold.
\begin{theorem}\label{reg-hilb-sp}
The  reproducing kernel Hilbert space $\mathfrak H_K$  is a space of right anti-regular functions.
\end{theorem}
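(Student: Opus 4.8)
The plan is to show that an arbitrary element $Wf$ of $\mathfrak H_K$ is right anti-regular by exhibiting it as a locally uniform limit of anti-regular polynomials. First I would make the form of $Wf$ explicit: substituting the definition (\ref{quat-CS}) of $\eta_\qu$ into (\ref{CS-isom}) and using property (v) of the left inner product (which sends the left scalar $\Phi_m(\qu)$ across the bracket as $\overline{\Phi_m(\qu)}$), one gets $Wf(\qu)=\sum_m\langle f\mid\phi_m\rangle\,\overline{\Phi_m(\qu)}$. Since each $\Phi_m$ is regular with real Taylor coefficients, $\Phi_m(\qu)=\sum_kc_{m,k}\qu^k$ with $c_{m,k}\in\R$, so $\overline{\Phi_m(\qu)}=\sum_kc_{m,k}\overline{\qu}^k$; hence $Wf(\qu)=\sum_m\alpha_m\overline{\qu}^m$ with $\alpha_m\in H$, exactly the announced form. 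Each partial sum $S_N(\qu)=\sum_{m\le N}\alpha_m\overline{\qu}^m$ is a polynomial in $\overline{\qu}$, hence right anti-regular (monomials $a\overline{\qu}^m$ are right anti-regular by the discussion following Definition \ref{D2}), so the real content is to pass right anti-regularity to the limit.

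Next I would upgrade the $L^2$-convergence to locally uniform convergence. Setting $g_N=\sum_{m\le N}\langle f\mid\phi_m\rangle\phi_m\in V^L_H$, the computation above gives $S_N=Wg_N$, and since $g_N\to f$ in $V^L_H$ and $W$ is an isometry, $S_N\to Wf$ in $\mathfrak H_K$. Applying the evaluation estimate (\ref{eval-est}) to $S_N-Wf\in\mathfrak H_K$ yields $\abs{S_N(\qu)-Wf(\qu)}\le\mathcal N(\qu)^{1/2}\,\|g_N-f\|_{V^L_H}$, so $S_N\to Wf$ pointwise on $\mathcal D$ and uniformly on every subset where $\mathcal N$ is bounded. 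Because $\mathcal N(\qu)=K(\qu,\qu)$ is the everywhere-finite diagonal of the reproducing kernel (\ref{quat-repker}) and is locally bounded on $\mathcal D$ --- explicitly so in the Hermite examples via Lemma \ref{L1} --- the convergence is uniform on compact subsets of $\mathcal D$.

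The crucial step is to deduce right anti-regularity of $Wf$ from locally uniform convergence, and this is where Lemma \ref{split-lemma}, in the anti-regular form noted in the remark after it, enters. Fixing $I\in\mathbb S$ and $J\in\mathbb S$ with $J\perp I$, I would restrict to the slice $L_I$ and split each $S_N\vert_{L_I}=F_N+JG_N$ into anti-holomorphic functions $F_N,G_N:\mathcal D\cap L_I\to L_I$. Projecting onto the orthogonal summands $L_I$ and $JL_I$ of $H$, uniform convergence of $S_N$ on compact subsets of $L_I$ forces $F_N\to F$ and $G_N\to G$ uniformly on compacts, and the classical one-variable Weierstrass theorem makes the limits $F,G$ anti-holomorphic. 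Hence $Wf\vert_{L_I}=F+JG$ satisfies the conjugate Cauchy--Riemann condition on $L_I$; as $I\in\mathbb S$ is arbitrary, $Wf$ is right anti-regular, and since $f\in V^L_H$ is arbitrary, $\mathfrak H_K=WV^L_H$ is a space of right anti-regular functions.

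I expect the main obstacle to lie in this last step: an $L^2$-convergent series $\sum_m\alpha_m\overline{\qu}^m$ need not by itself define a regular function, so merely knowing the announced form of $Wf$ is not enough. The slice-by-slice splitting of Lemma \ref{split-lemma} together with the one-variable Weierstrass theorem is what controls the limit, and one must additionally confirm that the anti-holomorphic pieces on the various slices assemble into a single real-differentiable (slice) function --- which is precisely where the everywhere-finiteness and local boundedness of $\mathcal N$ established in the second step are needed.
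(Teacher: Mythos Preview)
Your proposal is correct and follows essentially the same route as the paper: approximate an arbitrary element of $\mathfrak H_K$ by finite (hence right anti-regular) partial sums, use the evaluation estimate (\ref{eval-est}) to upgrade $L^2$-convergence to locally uniform convergence, and then invoke the splitting Lemma~\ref{split-lemma} together with the one-variable Weierstrass theorem on each slice $L_I$ to pass anti-regularity to the limit. If anything, you are more explicit than the paper about the local boundedness of $\mathcal N$ and about how the splitting reduces matters to the classical complex case.
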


\begin{proof}
 First note that in view of the boundedness condition
 $$
  \mathcal N (\qu) = \sum_m \vert \Phi_m (\qu)\vert^2 < \infty, \qquad \qu \in \mathcal D,$$
to each $\qu \in \mathcal D$ and  $I \in \mathbb S$, there exists a neighbourhood of the
origin, $N(\qu) \subset L_I \cap \mathcal D$, in which the above sum converges uniformly.
 Let $\phi_m$ be one  of the basis vectors in $V^L_H$, used to build the coherent
 states (\ref{quat-CS}). By (\ref{CS-isom}), $W\phi_m = \overline{\Phi}_m$ and since
 $\Phi_m$ has a series expansion with real coefficients, $\overline{\Phi_m (\qu)} =
 \Phi_m (\overline{\qu})$. Thus these vectors form a basis for $\mathfrak H_K$. A
 general element $f \in \mathfrak H_K$ is thus an $L^2$-limit  of a sequence of
 vectors $f_n , \;\; n =1,2,3, \ldots$, each one of which is a finite (left)
 quaternionic linear combination of these basis vectors. Hence each $f_n$ is a right
 anti-regular function. By (\ref{eval-est}) $\vert f_n(\overline{\qu}) -
 f(\overline{\qu})\vert \leq \vert \mathcal N (\qu)\vert\; \Vert f_n - f\Vert_{V^L_H}$
 and thus for each $\qu \in \mathcal D$, the sequence $f_n (\overline{\qu})$ converges
 to $f(\overline{\qu})$ uniformly in $\overline{N(\qu)}$. Since each $f_n$ is an
 anti-regular function, and since the uniform limit of an anti-holomorphic function is
 anti-holomorphic, Lemma \ref{split-lemma} implies that $f$ is also anti-regular, proving
 the theorem.
\end{proof}

Henceforth we shall denote the space $\mathfrak H_K$ by  $\mathfrak H_{a-reg}$. In
an entirely analogous manner we can also define a reproducing kernel Hilbert space
$\mathfrak H_{reg}$ of regular functions, starting from coherent states built out
of the polynomials $H_m (\overline{\qu})$ instead of $H_m (\qu)$. Both these spaces
are contained as subspaces of $L^2 (H, d\mu)$.

\subsection{Two examples}\label{subsec-examples}
We now construct coherent states following the above procedure and using the Hermite polynomials introduced in Section \ref{sec-quherm-polyn}. From (\ref{H4}) and Lemma \ref{L1} it is clear that the functions $H^s_n(\qu )$ in (\ref{C3}) satisfy the  conditions 1. and 2. stated at the beginning of Section \ref {subsec-gen-cons}. Thus, we have the result:

\begin{theorem}\label{reg_CS_theor}Let $\{\phi_n\}_{n=0}^{\infty}$ be an orthonormal 
basis of the left quaternionic Hilbert space $V_H^L$.
For $\qu\in H$, $0<s<1$, $K_s(\qu,\overline{\qu})$ as in (\ref{C2}) and $H_n^s(\qu)$ as in (\ref{C3}), 
the set of vectors
\begin{equation}\label{H8}
 \eta_{\qu,s}=K_s(\qu,\overline{\qu})^{-\frac{1}{2}}\sum_{n=0}^{\infty}H_n^s(\qu)\phi_n\in V_H^L
\end{equation}
forms a set of coherent states.
\end{theorem}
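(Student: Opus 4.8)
The plan is to recognize (\ref{H8}) as a direct instance of the general construction of Section \ref{subsec-gen-cons}, taking $X = H$, $d\mu = d\eta$, the functions $\Phi_n$ to be the normalized polynomials $H_n^s(\qu)$ of (\ref{C3}), and the weight $\mathcal N(\qu) = K_s(\qu,\overline{\qu})$. Once the two hypotheses (conditions 1 and 2 at the start of Section \ref{subsec-gen-cons}) are checked for this choice, the assertions that $\eta_{\qu,s}$ is normalized and satisfies a resolution of the identity follow from the general machinery; the real content is verifying those hypotheses while keeping careful track of the non-commutativity of the quaternionic coefficients.

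For condition 1, I would invoke Lemma \ref{L1}, which gives $\mathcal N(\qu) = \sum_n H_n^s(\qu)\overline{H_n^s(\qu)} = K_s(\qu,\overline{\qu}) = K_s(z,\overline{z})\,\mathbb{I}_2$, and the explicit formula (\ref{H5}) shows $K_s(z,\overline{z})$ is a strictly positive finite real number for every $\qu$. The structural point I would use repeatedly is that $\mathcal N(\qu)$ is a \emph{real} multiple of $\mathbb{I}_2$ and hence central: it commutes with every quaternion. Condition 2 is precisely the orthonormality (\ref{H6}), i.e. Theorem \ref{TH1} together with the normalization (\ref{C3}). Normalization of $\eta_{\qu,s}$ is then immediate: writing $c = K_s(\qu,\overline{\qu})^{-1/2}$ (central and real) and applying properties (iv)--(v) of the left inner product gives $\langle \eta_{\qu,s}\mid\eta_{\qu,s}\rangle = c\Big(\sum_n H_n^s(\qu)\overline{H_n^s(\qu)}\Big)c = c\,K_s(\qu,\overline{\qu})\,c = 1$.

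For the resolution of the identity I would check the weak form $\int_H \langle f\mid\eta_{\qu,s}\rangle\langle\eta_{\qu,s}\mid g\rangle\,K_s(\qu,\overline{\qu})\,d\eta(\qu) = \langle f\mid g\rangle$ for all $f,g\in V_H^L$. Expanding both brackets with the left-module rules gives $\langle f\mid\eta_{\qu,s}\rangle = \sum_n\langle f\mid\phi_n\rangle\,\overline{H_n^s(\qu)}\,c$ and $\langle\eta_{\qu,s}\mid g\rangle = \sum_m c\,H_m^s(\qu)\langle\phi_m\mid g\rangle$, and since $c$ and $K_s(\qu,\overline{\qu})$ are central real scalars the factors $c\cdot c\cdot K_s = 1$ cancel, leaving $\sum_{n,m}\langle f\mid\phi_n\rangle\big[\int_H \overline{H_n^s(\qu)}H_m^s(\qu)\,d\eta(\qu)\big]\langle\phi_m\mid g\rangle$.

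The main obstacle is exactly this integral $\int_H \overline{H_n^s(\qu)}H_m^s(\qu)\,d\eta(\qu)$: because $H$ is non-commutative it is the \emph{reversed-order} product and is \emph{not} the relation (\ref{H6}) we were handed (there the order is $H_n^s\overline{H_m^s}$). I would establish it directly from the matrix decomposition $H_n^s(\qu) = u_\qu\,\mathrm{diag}\big(H_n^s(z),H_n^s(\overline{z})\big)\,u_\qu^\dagger$: conjugating and multiplying yields $\overline{H_n^s(\qu)}H_m^s(\qu) = u_\qu\,\mathrm{diag}\big(\overline{H_n^s(z)}H_m^s(z),\,\overline{H_n^s(\overline{z})}H_m^s(\overline{z})\big)\,u_\qu^\dagger$. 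Integrating first over $z$ against $d\nu$, each diagonal slot reduces to a complex Hermite integral of type (\ref{H2}); these evaluate to $b_n(s)\delta_{nm}$ in \emph{both} orderings, since $b_n(s)$ is real and (\ref{H2}) is invariant under complex conjugation, so each slot yields $\delta_{nm}$ after the $b_n(s)^{-1}$ normalization. The remaining $SU(2)$ average collapses via $\int_{SU(2)} u_\qu u_\qu^\dagger\,d\omega(u_\qu) = \mathbb{I}_2$, giving $\int_H \overline{H_n^s(\qu)}H_m^s(\qu)\,d\eta(\qu) = \delta_{nm}$. Substituting this and applying Parseval's identity $\sum_n\langle f\mid\phi_n\rangle\langle\phi_n\mid g\rangle = \langle f\mid g\rangle$ for the orthonormal basis $\{\phi_n\}$ would complete the verification.
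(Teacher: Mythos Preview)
Your proposal is correct and follows the same route as the paper: the paper simply remarks that by (\ref{H4}) and Lemma~\ref{L1} the functions $H_n^s(\qu)$ satisfy conditions 1 and 2 of Section~\ref{subsec-gen-cons}, and then states the theorem with no further argument. Your write-up is considerably more detailed than the paper's one-sentence justification; in particular, you correctly isolate and resolve a point the paper passes over in silence, namely that the resolution of the identity in the left-module setting requires the integral $\int_H \overline{H_n^s(\qu)}\,H_m^s(\qu)\,d\eta(\qu)$ with the factors in the \emph{opposite} order from (\ref{H6}), and that this is handled by the diagonal decomposition together with the reality of the complex orthogonality constants.
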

By Theorem \ref{reg_CS_theor}, the reproducing kernel Hilbert space associated to this
family of CS is a space of right anti-regular functions. Similarly, had we constructed these
CS using the functions $H^s_n (\overline{\qu})$, the corresponding reproducing kernel Hilbert
space would have consisted of right regular functions, and in fact would have been
the space $A_s$ in (\ref{C1}).

\bigskip

 Next, for each fixed $n$, let  $B_n=\left\{h_{n,m}(\qu,\overline{\qu})\mid m\in\mathbb{N},
 \right\}$ and let $A_n(H)$ be the closed linear span, under left multiplication by
 quaternionic scalars, of the vectors in $B_n$. Then $B_n$ is a basis of $A_n(H)$ and
$\bigoplus_{n=0}^{\infty}A_n(H)$ is a left quaternionic Hilbert space, which is a closed
subspace of $L^2_H(H, e^{-\abs{z}^2}d\lambda d\omega(u_{\qu})).$ Note that unlike in the case of
the complex polynomials, discussed in Section \ref{subsec-comp-herm}, where the vectors
$h_{n,m}$ spanned the entire space $L^2(\mathbb{C}, e^{-|z|^2}d\lambda)$, here
$\bigoplus_{n=0}^{\infty}A_n(H)$ is only a proper subspace of
 $L^2_H(H, e^{-\abs{z}^2}d\lambda d\omega(u_{\qu})).$
Further
\begin{equation}\label{E10a}
K_n(\qu_1,\overline{\qu_2})=\sum_{m=0}^{\infty}\frac{1}{n!m!}h_{n,m}(\qu_1,\overline{\qu}_1)
\overline{h_{n,m}(\qu_2,\overline{\qu}_2)}
\end{equation}
is a reproducing kernel of the Hilbert subspace $A_n(H)$. The convergence of the above sum easily follows from the convergence of the analogous sum in the complex case, i.e., with $\qu$ replaced by the complex variable $z$ \cite{intint}.  In particular we have $K_0(\qu,\qu)=e^{|\qu|^2}$.\\
Assume that  $\{\phi_m\}_{m=0}^{\infty}$ is an orthonormal basis of $V_H^L$. For $\qu\in H$, define
\begin{equation}\label{E10}
\eta_{\qu,n} := K_n(\qu,\overline{\qu})^{-1/2}\sum_{m=0}^{\infty}\frac{h_{n,m}(\qu,\overline{\qu})}{\sqrt{n!m!}}\phi_m\in V_H^L.
\end{equation}
The vectors $\{\eta_{\qu , n} \mid q\in H\}$ are then a family of coherent states, for each $n$. In particular, the vectors
$$
  \eta_{\qu, 0} = e^{-\frac {\vert\qu\vert^2}2} \sum_{m=0}^\infty \frac {\qu^m}{\sqrt{m!}}\phi_m
  \in V^L_H\; , $$
are the so-called {\em canonical quaternionic coherent states\/.}
The corresponding reproducing kernel, which is easily computed using (\ref{quat-repker}), is
seen to be
$$ K_0 (\qu_1 , \overline{\qu_2} ) = \sum_{n=0}^\infty \frac {\qu_1^n \; \overline{\qu_2^n}}{n!}\; , $$
the quaternionic analogue of the well-known complex Bargmann kernel. (Note that, since $\qu_1$ and
$\overline{\qu_2}$ do not necessarily commute, we cannot write this as
$\exp[\qu_1 \overline{\qu_2}]$.)

Again, the reproducing kernel Hilbert space associated to these canonical quaternionic
CS is a space of right anti-regular functons. It is of course a subspace of the bigger
space $ L_H^2(H, e^{-\abs{z}^2}d\lambda d\omega(u_{\qu}))$. This is analogous
to the space of anti-analytic functions generated by the canonical CS
(\ref{can-comp_CS}) on a complex Hilbert
space. Similarly, we could have constructed the conjugate family of
canonical quaternionic CS and the reproducing kernel space would have consisted of right
regular functions, again as in the complex case. Thus, we get the two spaces
(completion under left multiplication by quaternionic scalars is implied),
$$\mathfrak{H}_{reg}=\overline{\text{span}\left\{h_{0,m}(\qu,\overline{\qu})~\vert~m\in\mathbb{N}
\right\}}$$
 and
$$\mathfrak{H}_{a-reg}=\overline{\text{span}\left\{h_{n,0}(\qu,\overline{\qu})~\vert~n\in\mathbb{N}
\right\}}\; ,$$
of right regular and right anti-regular functions, respectively.
The first Hilbert space is the quaternionic analogue of the Bargmann space of analytic
functions, with
$$ h_{0,m} (\qu , \overline{\qu}) = \frac {\qu^m}{\sqrt{m!}}\; , \qquad m =0,1,2, \ldots\; . $$
It is interesting to note that if we define the formal annihilation and creation operators on this space
by $a h_{0,m}=  \sqrt{m}h_{0, m-1},\;\; a^\dag h_{0, m} = \sqrt{m+1}h_{0, m+1}$, then
these have realizations by the (Cullen) derivative w.r.t. $\qu$ and multiplication by $\qu$,
respectively (again in complete analogy with the complex case).

Note that since the $h_{n,m}(\qu,\overline{\qu})$ are mutually orthogonal elements in the ambient
space
$L_H^2(H,e^{-\abs{z}^2}$ $d\lambda d\omega(u_{\qu})),$ the elements of $\mathfrak{H}_{reg}$ and
$\mathfrak{H}_{a-reg}$ are mutually orthogonal except for the one-dimension\-al subspace
generated by the vector $h_{0,0}(\qu,\overline{\qu})=1$ which is common to both spaces.

\section{Conclusion}\label{sec-disc}
 Using the notion of the Cullen derivative and the related notions of regular and anti-regular functions
of quaternionic variables,
we have obtained a wide ranging generalization of certain physically interesting classes of coherent
states to quaternionic Hilbert spaces. The analysis shows, among others, that all the so-called
non-linear coherent
states, which can be realized on Hilbert spaces of analytic or anti-analytic functions, have
quaternionic generalizations. In the process we have also obtained fairly straightforward
generalizations of two different types of complex orthogonal Hermite polynomials to analogous
polynomials in a quaternionic variable, again satisfying similar orthogonality and recursion
relations. It would be interesting to explore other families of orthogonal polynomials in the same vein.
In recent years, formulations of quantum mechanics on quaternionic
Hilbert spaces have been proposed to address some of the conceptual problems associated to particle
interactions at very short distances (see, for example, \cite{Ad}). In view of the importance of
coherent states in usual quantum mechanics, it is expected that they would also be of importance in
quaternionic quantum mechanics.


\end{document}